
\documentclass{article}

\usepackage{aaai20}
\usepackage{times}  
\usepackage{helvet}  
\usepackage{courier}  
\usepackage{url}  
\usepackage{graphicx}  
\frenchspacing  
\setlength{\pdfpagewidth}{8.5in}  
\setlength{\pdfpageheight}{11in}  
\usepackage{tikz}
\usepackage{pgflibraryshapes}  
\usetikzlibrary{positioning}  
\usetikzlibrary{patterns,arrows,shapes}
\usepackage{mathrsfs}
\usepackage{paralist}
\usepackage{tabularx}
\usepackage{diagbox}
\usepackage{nicefrac}
\usepackage[centertags,fleqn]{amsmath}
\usepackage{amssymb,amsfonts}
\usepackage{txfonts}
\usepackage{array,xspace,multirow,epic}
\usepackage{algorithm}
\usepackage[noend]{algorithmic}
\usepackage{eqparbox}
\usepackage{booktabs}  
\usepackage{verbatim,ifthen}
\usepackage{booktabs}  
\usepackage{enumitem}
\usepackage{multicol}
\usepackage{hyperref}

\usepackage{caption}
\usepackage{subcaption}

\usepackage{array,xspace,multirow,epic}
\usepackage{boxedminipage}
\usepackage{xspace}
\newcommand{\pbDef}[3]{%
\noindent
\begin{center}
\begin{boxedminipage}{0.98 \columnwidth}
#1\\[5pt]
\begin{tabular}{l p{0.75 \columnwidth}}
Input: & #2\\
Question: & #3
\end{tabular}
\end{boxedminipage}
\end{center}
}
\usepackage{etoolbox}
\usepackage{stfloats}

\newcounter{Bew1}
\newcounter{Bew2}

\setlength{\pdfpagewidth}{\paperwidth}
\setlength{\pdfpageheight}{\paperheight}

\usepackage{theorem}

\newtheorem{theorem}{Theorem}
\newtheorem{lemma}{Lemma}

\newtheorem{proposition}{Proposition}

\theoremstyle{definition}

\newtheorem{example}[theorem]{Example}
\newtheorem{remark}[theorem]{Remark}

\newcommand{\qed}{\unskip\hspace*{1em}\hspace{\fill}$\Box$}
\newenvironment{proof}[1][Proof]{\begin{trivlist}
  \item[\hskip \labelsep {\it #1:}]}{%
    \qed\end{trivlist}}

\usepackage{enumitem}
\setenumerate[1]{label=\rm(\it{\roman{*}}\rm),ref=({\it\roman{*}}),leftmargin=*}
\newlength{\wordlength}

\algsetup{linenodelimiter=\,}
\algsetup{linenosize=\tiny}
\algsetup{indent=2em}

			\newcommand{\pref}{\ensuremath{\succsim}}


\newcommand{\citet}[1]{\citeauthor{#1}~\shortcite{#1}}
\newcommand{\citep}{\cite}

\newcommand{\midd}{\mathbin{:}}

\frenchspacing
\sloppy
\raggedbottom

\begin{document}


\title{The Constrained Round Robin Algorithm for Fair and Efficient Allocation}


\author{
Haris Aziz\\
UNSW Sydney and Data61 CSIRO \\
haris.aziz@unsw.edu.au\\
\And
Xin Huang\\
The Chinese University of Hong Kong\\
xhuang@cse.cuhk.edu.hk 
\AND
Nicholas Mattei\\
Tulane University\\
nsmattei@tulane.edu 
\And
Erel Segal-Halevi\\
Ariel University\\
erelsgl@gmail.com
}


\maketitle

    \begin{abstract}
	    We consider a multi-agent resource allocation setting that models the assignment of papers to reviewers. A recurring issue in allocation problems is the compatibility of welfare/efficiency and fairness. Given an oracle to find a welfare-achieving allocation, we embed such an oracle into a flexible algorithm called the Constrained Round Robin (CRR) algorithm, that achieves the required welfare level. Our algorithm also allows the system designer to lower the welfare requirements in order to achieve a higher degree of fairness. If the welfare requirement is lowered enough, a strengthening of envy-freeness up to one item is guaranteed. Hence, our algorithm can be viewed as a  computationally efficient way to interpolate between welfare and approximate envy-freeness in  allocation problems.
\end{abstract}

\section{Introduction}
In multi-agent resource allocation, 
two orthogonal concerns are \emph{efficiency}, e.g., Pareto optimality or maximizing the sum of utilities, and \emph{fairness}, e.g., envy-freeness. In certain deployed applications, such as assigning papers to reviewers, the primary focus is on efficiency. But when the agents' preferences are highly correlated, there may be many different efficient allocations, most of which are not fair.  Indeed, fairness concerns pose their own challenges. In several settings concerning indivisible resources, a fair allocation may not exist, and if it does exist --- it might be hard to compute and might fail to satisfy efficiency requirements.

Several recent works focus on a particular combination of efficiency and fairness --- finding an allocation that is both Pareto-optimal (PO) and \emph{envy-free except one item (EF1)}.
\citet{CKM+16a} prove that any allocation maximizing the \emph{Nash welfare} (the product of utilities) is both PO and EF1, and \citet{BMV17a} present a pseudo-polynomial-time algorithm for finding a PO and EF1 allocation. 
These fascinating results are still limited in two ways: 
(1) They do not provide any guarantee for more popular welfare goals such as \emph{utilitarian maximality} (maximizing the sum of utilities) or \emph{rank-maximality}.
(2) They work only for the basic setting in which all utilities are weakly-positive (i.e., all items are goods), there is a single copy of each item, and there are no constraints on the number of items given to each agent.
Thus they do not apply to practical problems such as assigning conference-papers to reviewers in which there are capacity constraints on both reviewers and papers and where agents may have negative utilities for the tasks.

The central research question we examine is the following one: \emph{``how can we find fair allocations that guarantee a specified level of welfare/efficiency?''}




\paragraph{Our Contribution.}

We consider a general allocation problem in which the utility of each item for each agent can be any real number (positive, negative, or zero), and there may be capacity constraints on both agents and items.
We focus on EF1 as the predominant fairness notion.
We consider \emph{utilitarian-maximality} (UM) and \emph{rank-maximality} (RM) as the predominant efficiency notions, though our approach can handle other efficiency notions. 

Ideally, given a fairness and an efficiency requirement, we would like to first decide whether there exists an allocation that satisfies both requirements. If no such allocation exists, we would like to find a fair allocation that is as efficient as possible. However, we show that both these goals are computationally hard even for three agents:
(1) deciding whether there exists an allocation that is both EF1 and UM, or both EF1 and 
RM, is NP-complete;
(2) computing an allocation that is UM or RM within the set of EF1 allocations is NP-hard.

In view of these issues, we present a general algorithm that finds an efficient allocation in a way that is provably fair for important classes of utility functions such as identical utilities. 
It is inspired by the \emph{round robin} (RR)  protocol for item allocation: In the RR protocol, each agent gets turns in a round-robin manner and in turn picks an item.
Our algorithm is called \emph{Constrained Round Robin (CRR)}:
it tries to find an allocation in a round-robin manner, under the constraint that the final allocation must satisfy a given efficiency goal.
For a given efficiency goal W, W-CRR is the corresponding CRR algorithm specified with respect to W and it returns an allocation satisfying W. 
In the special case in which W is null (i.e, no welfare goal is imposed), we show that W-CRR finds an EF1 allocation for all utilities consistent with the ordinal preferences.
We call this novel strengthening of EF1 \emph{necessary EF1 (NEF1)}. 
\ifdefined\JOURNALVERSION
The concept is novel and is of independent interest.
It is an EF1 property that is robust to perturbations in the utility function.
\fi

The generality of CRR allows the system designer to gradually increase the welfare requirements by modifying W, as long as the resulting allocation is sufficiently fair for his/her needs%
\ifdefined\JOURNALVERSION
(see Figure~\ref{figure:RB})%
\fi
.
W-CRR can be considered a practical engineering approach for interpolating between welfare and fairness, re-using existing algorithms and implementations for finding allocations that achieve a target welfare level.
Using extensive simulation experiments on both real-world and synthetic data, we examine W-CRR where W is utilitarian-maximality or rank-maximality, and show that in many settings W-CRR produces fairer allocations than several existing algorithms for maximum Nash welfare or maximum Egalitarian welfare.


%


\ifdefined\JOURNALVERSION
\begin{figure}
\smartdiagram[circular diagram:clockwise]{reduce W if the allocation is not fair enough, W-CRR}
\caption{Our framework of finding an allocation satisfying a target welfare level W and target level of envy-freeness up to one item fairness.}
\label{figure:RB}
\end{figure}
\fi

\section{Related Work}
\citet{BCM15a} survey the main algorithms and considerations in fair item allocation, including fairness and welfare considerations.

A flexible approach to achieve varying levels of utilitarian and egalitarian welfare is applying ordered weighted averages over the agents' total utilities~\citep{Yager88a}. The approach smoothly interpolates between utilitarian welfare and egalitarian welfare but it is computationally intractable. There is also no known connection with EF1. 

Nash welfare based approaches have been proposed to achieve both Pareto optimality and EF1~\citep{CKM+16a}, but computing it is NP-hard. 
When there are additional side constraints such as agents getting an equal number of items, then EF1 is not implied by an allocation maximizing Nash welfare. 
\citet{LMNW18a} proposed the Sum-OWA Maximizing rule that uses ordered weighted averaging over the \emph{agent's individual utilities for items}. 
\ifdefined\JOURNALVERSION
The idea behind the approach is to put less emphasis on an agent getting lesser preferred items if she has already got more preferred items thereby letting other agents have more chance of getting their most preferred items. 
\fi
They experimentally showed that the approach can reduce certain measures of inequality. They did not establish whether the approach achieves some formal notion of fairness for any of the ordered vectors. The papers discussed assume that the agents' utilities are positive. On the other hand, W-CRR, in general, can cater for positive or negative utilities. 
\ifdefined\JOURNALVERSION
The approach of  \citet{LMNW18a} also guarantees a weak form of Pareto optimality. On the other hand, we can guarantee Pareto optimality 
\fi

\citet{GKKM+10a} considered the problem of assigning papers to referees and focused on maximizing the utility of the worst off agent. They also considered rank maximal allocations when considering reviewer assignments. Rank maximality has been considered as a criterion when computing desirable allocations~\citep{Palu13a}.
\ifdefined\JOURNALVERSION
In recent years, polynomial-time combinatorial algorithms have been presented for many-to-many allocations with upper capacity constraints~\citep{Palu13a}.
\fi

\citet{BiBa18a} considered fairness issues when there are upper cardinality constraints on the categories of items in agents' allocations. They did not consider welfare measures. 

\citet{SAH17a} considered a sub-class of additive valuations and characterized the conditions under which allocations are necessarily envy-free for all valuations in the sub-class. 
\citet{Hale18b} considered fairness-welfare interpolation in the division of a heterogeneous resource (``cake-cutting''). The results crucially depend on the divisibility of the cake. 
\ifdefined\JOURNALVERSION
He presented an algorithm that, given an efficient allocation, a fair allocation, and a trade-off ratio $r\in[0,1]$, returns an allocation that is an $r$-approximation to  efficiency and $(1-r)$-approximation to fairness. However, the algorithm crucially depend on the divisibility of the cake. 
It is not clear whether it can be generalized to indivisible goods.
\fi
\citet{CKKK12} studied the fairness-welfare tradeoff through the lens of the \emph{price of fairness} --- the ratio between the maximum welfare of an arbitrary allocation and the maximum welfare of a fair allocation.
\ifdefined\JOURNALVERSION
For indivisible items, they prove that the price of proportionality is $n-1+1/n$, the price of envy-freeness is between $n/3+7/9$ and $n-1/2$, and the price of equitability might be infinite.
Their upper bounds are not constructive --- they prove that \emph{every} proportional/envy-free allocation has a price-of-fairness of at most $n-O(1)$.
\fi

The complexity of computing a Pareto optimal and EF1 allocation is an open question in the literature~\citep{CKM+16a,BMV17a}. We show that strengthening PO to maximum welfare immediately leads to computational hardness.
	
\section{Setting}
\label{sec:setting}
\subsection{Agents and Utilities}
An allocation problem is a tuple $(N,O,\pref,u, c)$ such that $N=\{1,\ldots, n\}$ is a set of agents, $O=\{o_1,\ldots, o_m\}$ is a set of items and the preference profile $\pref=(\pref_1,\ldots, \pref_n)$ specifies for each agent $i$ his complete and transitive preference $\pref_i$ over $O$. Agents may be indifferent among items, i.e., they provide weak rankings of the items.  Hence, we denote $\pref_i: E_i^1,\dots,E_i^{k_i}$ for each agent $i$ with equivalence classes in decreasing order of preferences.
%
%
We assume that each $\pref_i$ is induced by a utility function $u_i: O\rightarrow \mathbb{R}$. 
In general the utility of an item can be positive, negative or zero. Some of our results apply only when all utilities have the same sign (all positive or all negative).
Agents are assumed to have additive utility so that $u_i(O')=\sum_{o\in O'}u_i(o)$. 

When $u_i$ is not known to the algorithm (for example, when the user interface does not allow the agents to submit their cardinal utilities), we can still take a cardinal approach by assuming that agent's utility functions are derived by some \emph{scoring rule}, that ascribes a utility for an item being in a given rank in the preference list. For example, in \emph{Borda scoring}, an agent with strict preferences over $m$ items gives utility $m-i+1$ for the $i$-th ranked item (see e.g., \citep{BLN+17a}).

An {\em allocation} $p$ is a function $p:N\rightarrow 2^O$ assigning each agent a bundle of items. Typically it is assumed that each item is allocated exactly once. However, we do not impose such a restriction, since we want to capture scenarios such as reviewer assignment, in which a paper is reviewed by multiple agents.
\ifdefined\JOURNALVERSION
agents have lower and upper bounds for how many papers they get, and papers have lower and upper bounds for how many agents review a paper.
Note that this cannot be captured by just creating several identical copies of the same item, since this would allow to give two copies of the same item to the same agent, which is usually not allowed.
\fi

There is a \emph{capacity function} $c$, which specifies for each agent $i$, the values $\underline{c}_i$ and $\overline{c}_i$, which are the minimum and maximum number of items that should be allocated to $i$.
Similarly, $c$ specifies for each item $o$, the
values $\underline{c}_o$ and $\overline{c}_o$, which are the minimum and maximum number of agents it may be allocated to.
Some of our results focus on the special case where for all $o,o'\in O$, $\overline{c}_o=\underline{c}_{o}=\overline{c}_{o'}=\underline{c}_{o'}$, 
and for all $i,i'\in N$, $\overline{c}_i=\overline{c}_{i'}$ and $\underline{c}_i=\underline{c}_{i'}$ and $\overline{c}_i-\underline{c}_i=1$. The assumptions are motivated by the conference reviewer assignment problem in which each paper is reviewed by the same number of agents and all reviewers review almost the same number of papers.  


We denote the set of all feasible allocations (all allocations consistent with the capacity constraints) by $\mathcal{A}$.
An allocation $p\in \mathcal{A}$ is called \emph{balanced} if $|(|p(i)|-|p(j)|)|\leq 1$ for all $i,j\in N$.

%
%

The set of all utility functions consistent with $\pref_i$ is denoted by $\mathcal{U}(\pref_i)$. 	
In this paper we do not consider strategic manipulations --- we assume that all agents truthfully report their valuations.

\ifdefined\JOURNALVERSION
\begin{example}
\label{exm:identical}
 Consider the scenario where $N=\{1,2,3,4,5,6\}$, $O=\{o_1,o_2, o_3,o_4\}$, and the preferences of all agents are the same: $o_1 \succ o_2 \succ o_3 \succ o_4$. Each paper has to be reviewed thrice and each agent can review at most two papers. So for each object $o\in O$, $\underline{c}_o = \overline{c}_o = 3$, and for each agent $i\in N$, $\underline{c}_i = \overline{c}_i = 2$.
 An example of a feasible allocation is one in which agents $1,2,3$ each get $\{o_1,o_2\}$ and $4,5,6$ each get $\{o_3,o_4\}$.
\end{example}
\fi

\subsection{Fairness}
An allocation $p$ is \emph{envy-free} (EF) if 
$u_i(p(i))\geq u_i(p(j))$ for all $i,j\in N$. Although EF is often viewed as the gold-standard for fairness, there may not exist an allocation that is EF. Furthermore, checking whether such an allocation exists is NP-complete (see e.g., \citep{BoLe15a}). 

An allocation $p$ is \emph{envy-free up to one item (EF1)} if for all $i,j \in N$, there exists a subset $O'\subseteq O$ of cardinality at most $1$, such that either $u_i(p(i)\setminus O') \geq u_i(p(j))$ or $u_i(p(i)) \geq u_i(p(j)\setminus O')$. Note that this definition makes sense for both positive and negative utilities.

An allocation $p$ satisfies \emph{necessary envy-freeness up to one item (NEF1)}  if
for all $i,j \in N$, 
there exists a subset $O'\subseteq O$ of cardinality at most $1$, such that 
for all $u_i\in \mathcal{U}(\pref_i)$, 
either $u_i(p(i)\setminus O') \geq u_i(p(j))$ or $u_i(p(i)) \geq u_i(p(j)\setminus O')$.

There is an equivalent way to write the definition of NEF1.
In the \emph{responsive set extension}, preferences over items are extended to preferences over sets of items in such a way that a set in which an item is replaced by a more preferred item is more preferred. For two allocations $p(i)$ and $q(i)$ of equal size,  we write $p(i) \mathrel{\pref_i^{RS}} q(i)$ if there is a bijection $f$ from $q(i)$ to $p(i)$ such that for each $o\in q(i)$, $f(o)\pref_i o$.
We can then say that an allocation $p$ is NEF1 if 
for all $i,j \in N$, there exists a subset $O'\subseteq O$ of cardinality at most $1$, such that
either 
$p(i)\setminus O' \pref_i^{RS} p(j)$
or
$p(i)\pref_i^{RS} p(j)\setminus O' $.
The equivalence of definitions is based on results by ~\citet{AGMW15a}.
NEF1 is a relaxation of the NEF concept as considered in a series of papers on fair allocation of indivisible items~\citep{AGMW15a,BEL10a,BEF03a}.
	
\subsection{Welfare}
There are many notions of welfare. We particularly focus on the following two.
An allocation $p$ is \emph{utilitarian-maximal (UM)} if it maximizes the sum of utilities:
$
p\in \arg\max_{q\in \mathcal{A}}\sum_{i\in N}\sum_{o\in q(i)}u_i(o)
$.
An allocation $p$ is \emph{rank-maximal (RM)} if it maximizes the number of agents assigned their 1st-best item, subject to that --- it maximizes the number of agents assigned their 2nd-best item, and so on (see e.g., \citep{irving2006rank,Manl13a}). 
Formally,  for any item $o\in O$, its \emph{rank} in agent $i$'s preference list $\pref_i$ is $j$ if $o\in E_i^j$ i.e., it is in $i$'s $j$-th equivalence class. For an allocation $p$, its corresponding \emph{rank vector} is $r(p)=(r_1(p), \ldots, r_m(p))$ where $r_j(p)$ is the total number of 
times an item $o$ is assigned to an agent for whom $o$ has a rank of $j$. Formally:
$
r_j(p) := \sum_{i\in N}|E_i^j\cap p(i)|.
$
We compare rank vectors lexicographically. A rank-vector $r=(r_1,\ldots, r_m)$ is \emph{better} than another rank-vector $r'=(r_1',\ldots, r_m')$ if, for the smallest $i$ such that $r_i\neq r_i'$, it must hold that $r_i>r_i'$.
An allocation $p$ is \emph{rank-maximal} if:
$p\in \arg\max_{q\in \mathcal{A}}r(q).$

We note that rank maximality can be captured by imposing specific cardinal utilities for agents and then maximizing utility~\citep{Mich07a}. The specific cardinal utilities are lexicographic: For an agent $i$ getting an item $o$ from the $j$-th equivalence class, she gets utility $({xm})^{m-j}$ where $x=\max\{\overline{c}_o\midd o\in O\}$.
 
\section{Combining Welfare and  Fairness}
\label{sec:combining}
Before presenting our W-CRR algorithm, we explain the rationale behind it. Our starting point is the observation that there is a simple class of algorithms that always return NEF1 allocations --- the \emph{recursively balanced sequential allocation}. These sequential allocation algorithms are parametrized by a pre-specified sequence of agent-turns. In sequential allocation, when an agent gets a turn, she picks a most preferred item that is not yet allocated~(see e.g., \citep{AWX15b,BoLa11a,BBL+18a}). In case of our general setting with capacity constraints, an agent picks a most preferred item that she can pick without violating the item capacity constraint. 
A sequence of turns is called \emph{recursively balanced (RB)} if at no point an agent has taken more than one turn than another agent (see e.g., \citep{AWX15b}).
For example for two agents and 4 items, the RB sequences are 1212, 1221, 2121 and 2112. 
We are also interested in the following subset of RB sequences. A sequence of turns is called \emph{round-robin (RR)} if, for some permutation $\pi$ of the agents, the sequence is $\pi_1,\pi_2,\ldots,\pi_n,\pi_1,\pi_2,\ldots,\pi_n,\ldots$
For example, for two agents and 4 items the RR sequences are 1212 and 2121.
All sequential allocation algorithms are \emph{ordinal} in nature because they ignore the cardinal information in the utility function of the agents. 
\ifdefined\JOURNALVERSION
They simply used the relative rankings of the items in the agents' preference lists. 

\begin{example}
Consider the following preferences of the agents.
\begin{itemize}
\item Agent 1: $a \succ b \succ c \succ d$
\item Agent 2: $b \succ c \succ d \succ a$
\end{itemize}
A recursively balanced picking sequences include 1212, 1221, 2121 and 2112. Suppose we use the sequence 1212. Then in the final allocation agent 1 gets $\{a,c\}$ and agent 2 gets $\{b,d\}$. Agent 1 will first pick $a$, then 2 $b$, then 1 $c$, and 2 $d$ (recall that we assume the agents are truthful).
\end{example}
\fi

Note that an RB picking-sequence always returns a balanced allocation. Therefore, if the capacity constraints rule out some balanced allocations (e.g. one agent must get exactly 3 items while another agent must get exactly 5 items), then RB may not return a feasible allocation. However, it always returns an approximately-fair allocation:
\begin{proposition}
\label{rb-is-nef1}
(a)
If all utilities are positive, then the outcome of sequential allocation for any \emph{RB} sequence is NEF1. 
(b)
If all utilities are negative, then the outcome of sequential allocation for any \emph{RR} sequence is NEF1.
\end{proposition}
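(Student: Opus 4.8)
The plan is to prove both parts by fixing an ordered pair of agents $i,j$ and exhibiting, for a single item $o^\ast$ (removed from $p(j)$ in part (a), from $p(i)$ in part (b)), an injection between the bundles that respects $\succsim_i$; the sign of the utilities is then used only to discard the unmatched ``leftover'' items. Concretely, for part (a) (goods) I would show there is an injection $g$ from $p(j)\setminus\{o^\ast\}$ into $p(i)$ with $g(o)\succsim_i o$ for every $o$; since all items have nonnegative utility, summing $u_i(g(o))\ge u_i(o)$ over $o\in p(j)\setminus\{o^\ast\}$ and using that $g$ is injective into $p(i)$ gives $u_i(p(i))\ge u_i(p(j)\setminus\{o^\ast\})$ for every $u_i\in\mathcal U(\succsim_i)$, which is exactly the second disjunct of NEF1. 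By the standard characterization of preference-respecting (dominating) matchings via a Hall/majorization condition, such an injection exists iff for every rank level $\ell$ the number of items of $p(j)\setminus\{o^\ast\}$ lying in $i$'s top $\ell$ equivalence classes $T_\ell:=E_i^1\cup\dots\cup E_i^\ell$ is at most $|p(i)\cap T_\ell|$. Taking $o^\ast$ to be the item of $p(j)$ most preferred by $i$, this reduces the whole claim to a single \emph{key inequality}: $|p(j)\cap T_\ell|\le |p(i)\cap T_\ell|+1$ for all $\ell$.

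For part (a) I would prove the key inequality directly from recursive balance and the greediness of sequential allocation. Since $T_\ell$ consists of $i$'s most preferred items, $i$ picks an item of $T_\ell$ on her turn whenever one is \emph{available} to her (not already held by her and not at full capacity). Let $t_0$ be $i$'s first turn at which no item of $T_\ell$ is available; before $t_0$ every pick of $i$ lay in $T_\ell$, and because availability of a $T_\ell$-item never returns (capacities only fill up and $i$'s holdings only grow), $i$ picks no $T_\ell$-item after $t_0$, so $|p(i)\cap T_\ell|$ equals the number of $i$'s turns before $t_0$. The RB property, applied to the pair $\{i,j\}$, guarantees that at every prefix the numbers of turns of $i$ and $j$ differ by at most one, so $j$ has at most one more turn than $i$ up to $t_0$; if in addition $j$ could take no $T_\ell$-item after $t_0$, we get $|p(j)\cap T_\ell|\le |p(i)\cap T_\ell|+1$ at once. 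Note that only recursive balance, not the round-robin structure, is used here, matching the statement of part (a).

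Part (b) (chores) is the mirror image, with the \emph{bottom} $\ell$ classes in place of the top ones: I would exhibit an injection from $p(i)\setminus\{o^\ast\}$ into $p(j)$ sending each item of $i$ to a weakly less preferred item of $j$, take $o^\ast$ to be $i$'s least preferred (last-picked) item, and use negativity to discard leftovers, reducing to the symmetric inequality counting how many ``bad'' items $i$ holds relative to $j$. The reason part (b) requires round-robin rather than mere recursive balance is that for chores the relevant bound combines two sources of slack — the $\pm1$ turn imbalance permitted by RB and the $\pm1$ greedy imbalance — which can compound to $2$ under a general RB sequence (e.g.\ a pattern such as $j,i,i,j$), forcing the removal of two items and violating EF1. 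Under round robin the two agents alternate with a fixed relative order, so I would instead run the classical round-by-round matching in two subcases: if $i$ precedes $j$ every round then at $i$'s turn $j$'s current-round item is still available, giving full envy-freeness; if $j$ precedes $i$ then I match $i$'s round-$r$ pick to $j$'s round-$(r{+}1)$ pick (still available when $i$ picks), leaving $j$'s first pick and at most one trailing item of $i$, whose removal (a fixed, ordinally determined item) restores the inequality.

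The main obstacle is the presence of multiple copies and capacity constraints, which break the clean pivot argument above: after $t_0$, agent $j$ may still pick a top-$\ell$ item that $i$ already holds but that still has spare capacity, so $|p(j)\cap T_\ell|$ need not freeze at $t_0$. I would handle this with the observation that any top-$\ell$ item that ends with spare capacity must be held by \emph{both} $i$ and $j$: if such an item were never full and not held by $i$, it would be available to $i$ at every one of her turns, contradicting that $i$ stops taking $T_\ell$-items only when none is available. These commonly-held items cancel in the difference $|p(j)\cap T_\ell|-|p(i)\cap T_\ell|$, reducing the key inequality to the full-capacity (``scarce'') top items, for which the single-copy turn-counting applies essentially verbatim. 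Verifying this reduction carefully — together with the edge cases where an agent reaches her own capacity $\overline c_i$ before exhausting the available items — is the delicate part of the proof; the ordinal, responsive-set formulation ensures that once the inequality holds, NEF1 (not merely EF1) follows automatically.
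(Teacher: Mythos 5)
Your single-copy argument is correct, but it takes a genuinely different route from the paper's. The paper never invokes a Hall-type condition: using recursive balance it pads $j$'s pick sequence so that $k'=k+1$, interleaves the turns as $\{s_1,t_1\},\dots,\{s_k,t_k\},\{t_{k+1}\}$, observes that $i$'s $(r-1)$-th pick precedes $j$'s $r$-th pick and hence $s_{r-1}\succsim_i t_r$ by greediness, and reads off the dominating bijection $t_r\mapsto s_{r-1}$ from $p(j)\setminus\{t_1\}$ into $p(i)$ directly (it removes $j$'s first pick rather than the item of $p(j)$ that $i$ most prefers); part (b) is declared to be the mirror image, which is essentially the two-subcase round-robin matching you describe. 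Your reduction to the prefix inequality $|p(j)\cap T_\ell|\le|p(i)\cap T_\ell|+1$ plus the Hall/majorization characterization reaches the same matching less directly; for single copies it is sound (your conditional clause ``$j$ could take no $T_\ell$-item after $t_0$'' holds automatically there, since at $t_0$ every $T_\ell$-item is fully allocated), but it costs an extra characterization theorem and buys nothing the explicit bijection does not already give.

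The genuine gap is exactly the part you flag as delicate: the patch for item capacities is wrong, and it cannot be repaired. Counterexample: two agents $i,j$; items $a,b,c,d,e$, where $a$ has capacity $2$ and the rest capacity $1$; each agent gets exactly $3$ items; round-robin order $(j,i)$; preferences $i: a\succ b\succ c\succ d\succ e$ and $j: b\succ c\succ a\succ d\succ e$. The picks are: $j$ takes $b$, $i$ takes $a$, $j$ takes $c$, $i$ takes $d$ (this turn is your $t_0$: items $b,c$ are full and $i$ already holds $a$, which still has a spare copy), $j$ takes $a$, $i$ takes $e$. So $p(i)=\{a,d,e\}$ and $p(j)=\{a,b,c\}$. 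Your key inequality fails at $\ell=3$ (it asserts $3\le 2$), and the proposed cancellation does not rescue it: after cancelling the common item $a$ you would need $|\{b,c\}|\le|\emptyset|+1$. The step ``single-copy turn-counting applies essentially verbatim'' collapses because $i$'s pre-$t_0$ turns were spent precisely on the commonly held item, so they are cancelled away together with it, leaving $i$ with no scarce top items against $j$'s two. Moreover, no alternative patch can exist, because the conclusion of part (a) itself fails on this instance: with the positive consistent utilities $(10,\,9.9,\,9.8,\,0.1,\,0.01)$, agent $i$'s bundle is worth $10.11$ while $p(j)$ minus any single item is worth at least $19.7$, so the outcome is not even EF1, let alone NEF1. (For what it is worth, the paper's own proof shares this blind spot: its justification ``$i$ picks $s_{r-1}$ when $t_r$ is still available'' treats an item with spare capacity as available to $i$, which fails when $i$ already holds a copy --- here $s_2=d\prec_i a=t_3$. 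Both proofs are really arguments for the single-copy setting, so your capacity discussion should be replaced by an explicit restriction of the claim rather than elaborated.)
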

\begin{proof}
(a) 
Let agent $i$'s allocation be $s_1,\ldots, s_k$ and agent $j$'s allocation be $t_1,\ldots, t_{k'}$. Since the picking sequence is recursively balanced, the allocation is balanced, which implies that $k'\in \{k-1,k,k+1\}$. Without loss of generality we let $k'=k+1$ by letting $j$ get dummy items after her actual turns have finished. Now let us look at the sequence of turns in which the items are picked: 
$ \{s_1,t_1\}, \ldots, \{s_k, t_k\}, \{t_{k+1}\}.$
When two items are paired in curly brackets, we do not distinguish which one is picked first. Note that for each $j\in \{2,\ldots, k'\}$, $s_{j-1}\pref_i t_{j}$ because $i$ picks $s_{j-1}$ when $t_j$ is still available. Consider the function $f$ from $p(j)\setminus \{t_1\}$ to $p(i)$ such that for each $t_j\in p(j)\setminus \{t_1\}$, $f(o)=s_{j-1}$. Hence $p(i)\pref_i^{RS} p(j)\setminus \{t_1\}$ and therefore, $p$ satisfies NEF1. 

%

(b) The argument is similar. It works by removing the most undesirable item allocated to the envious agent, instead of the most desirable item allocated to the envied agent. 
\end{proof}
Note that if all utilities are negative, the outcome of an RB sequence that is \emph{not} RR might not even be EF1 (consider the sequence A,B,B). 
If utilities are mixed, then even the outcome of an RR sequence might not be EF1 \citep{ACI+18}.


Sequential allocation with respect to a round-robin sequence does not necessarily maximize any social welfare function.
Moreover, even EF1, which is weaker than NEF1, might be incompatible with welfare-maximization requirements. It is well-known that EF1 is incompatible with UM when there are no capacity constraints (for example, if Alice's utility for every item is higher than Bob's utility, the only UM allocation gives all items to Alice, which is obviously not EF1 for Bob). The following example shows that this incompatibility also holds when all agents have an identical capacity constraint and when all utilities are based on Borda scores, and it also holds for RM. 
\ifdefined\JOURNALVERSION
\else
\fi
\begin{example}
There are 3 agents and 9 items, with a single copy per item. The following table shows the utility of each item to each agent:

\noindent
\begin{tabular}{cccccccccc}
\hline 
	\footnotesize
\textbf{Item}: & $o_1$ & $o_2$ & $o_3$ & $o_4$ & $o_5$ & $o_6$ & $o_7$ & $o_8$ & $o_9$
 \\
 \hline
\footnotesize
\textbf{Agents 1,2}: & 9 & 8 & 7 & 6 & 5 & 4 & 3 & 2 &
1
\\ 
\hline
\footnotesize
\textbf{Agent 3}: & 6 & 9 & 8 & 7 & 5 & 4 & 3 & 2 & 1
\\ 
\hline 
\end{tabular} 
We first consider the case in which each agent's capacity is unbounded. In any utilitarian-maximal or rank-maximal  allocation, each item must be given to an agent who assigns to it a highest utility. Therefore, Agent 3 must get (at least) the set $\{o_2,o_3,o_4\}$.
This leaves the other two agents a total utility of at most $24$. So one of these agents gets a utility of at most $12$. This agent EF1-envies Agent 3.

Next, we consider the case in which each agent must get exactly 3 items (i.e., the allocation must be balanced). Under this constraint,  utilitarian-maximal or rank-maximal imply Agent 3 must get exactly $\{o_2,o_3,o_4\}$.
By the same argument, no utilitarian-maximal or rank-maximal  allocation is EF1.
\end{example}

 
Given this incompatibility, we would like to decide whether there exists, among all welfare-maximizing allocations, one that is also fair. However, this can be computationally challenging. 
Below we show that, checking whether a UM / RM and EF1 allocation exists is NP-complete when there are at least three agents.

%
%
%
%

%
\begin{proposition}
For three agents, checking whether there exists an allocation that is both rank-maximal and EF1, or both utilitarian-maximal and EF1, is NP-complete.
\end{proposition}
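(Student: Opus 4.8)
The plan is to prove membership in NP directly and hardness by reduction from \textsc{Partition}. For membership, observe that an allocation $p$ is itself a polynomial-size certificate. EF1 is checkable in polynomial time: for each ordered pair $(i,j)$ we compare $u_i(p(i))$ with $u_i(p(j))$ after deleting the single most valuable item of $p(j)$ (and symmetrically the own-bundle option of the definition). Utilitarian welfare is the value of a maximum-weight bipartite assignment respecting the capacity function $c$, which is computable in polynomial time by a flow/assignment algorithm, so UM is verified by recomputing the optimum and comparing. The rank-maximal value is likewise computable in polynomial time by the standard rank-maximal matching algorithm (extended to capacities), so RM is verifiable. Hence both decision problems lie in NP.

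For hardness I would reduce from \textsc{Partition}: given positive integers $a_1,\dots,a_k$ with $\sum_i a_i = 2S$, decide whether some subset sums to $S$. The construction uses two \emph{twin} agents $1,2$ with identical utilities and a third \emph{gadget} agent $3$. The core items are $G=\{g_1,\dots,g_k\}$ with $u_1(g_i)=u_2(g_i)=a_i$, valued by the twins strictly above anything agent $3$ or any auxiliary item offers, so that every UM (resp.\ RM) allocation must assign each $g_i$ to one of the two twins. All remaining (auxiliary) items, together with tight capacity constraints of the special ``conference'' form $\overline{c}_i-\underline{c}_i=1$, are chosen so that in every welfare-optimal allocation the bundle of agent $3$ and the non-$G$ parts of the twins' bundles are completely fixed; the \emph{only} remaining freedom is the partition $G=G_1\sqcup G_2$, where $G_\ell$ is the part given to twin $\ell$. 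Thus the set of UM (resp.\ RM) allocations is in bijection with partitions of $G$. I would then argue that such an allocation is EF1 if and only if the split is perfectly balanced, i.e.\ $\sum_{g_i\in G_1}a_i=\sum_{g_i\in G_2}a_i=S$, which is exactly a yes-instance. The forward direction is routine: a balanced split makes the twins' bundles equal in value, and the auxiliary items and capacities are arranged so that neither twin envies agent $3$, and agent $3$ envies neither twin, by more than one item.

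The difficult direction, and the main obstacle, is the converse: ruling out EF1 whenever no balanced split exists. The enemy is precisely the ``up to one item'' slack in EF1 --- between two identical additive agents a single large item can absorb an imbalance, so a naive twin construction is EF1 for \emph{every} split and encodes nothing (indeed no purely twin-vs-twin comparison can detect a size-$2$ numeric gap). My plan to defeat this is to route the binding envy \emph{through agent $3$} and through the capacity constraints, so that in any unbalanced UM allocation the inevitable envy strictly exceeds the largest single removable item. Concretely I would calibrate agent $3$'s fixed bundle and her valuations of the $g_i$, and pad the instance with forced auxiliary items of controlled value, so that an imbalance $\bigl|\sum_{G_1}a_i-S\bigr|\ge 1$ is amplified into an EF1 violation of agent $3$ against the over-endowed twin regardless of which single item is deleted; establishing this gap inequality for \emph{all} unbalanced splits is the technical heart of the reduction. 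I expect the polarity of the inequality (making the amplified gap outgrow every removable item while keeping the balanced case slack-free) to be the delicate point, to be handled by choosing the padding and agent $3$'s scale with care, possibly together with the balancedness forced by the $\overline{c}_i-\underline{c}_i=1$ capacities.

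Finally, for the rank-maximal version I would either repeat the construction purely ordinally --- designing preference lists so that rank-maximality pins down the same fixed skeleton and the same free partition of $G$ --- or transfer the utilitarian hardness through the lexicographic cardinalization of RM noted earlier (a rank-$j$ item giving utility $(xm)^{m-j}$), checking that EF1 is preserved under this encoding. Combining NP-membership with the hardness reduction then yields NP-completeness of both problems for three agents.
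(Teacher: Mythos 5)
Your NP-membership argument is fine, and your hardness skeleton --- a reduction from \textsc{Partition} with two ``twin'' agents receiving the number-items and a third agent whose bundle is pinned down by welfare-maximality, so that the only remaining freedom is the split of the number-items --- is exactly the structure of the paper's proof. You also correctly identify the central obstacle: the one-item slack of EF1 can absorb a numeric imbalance. But your proposal stops precisely there. You declare the gap inequality ``the technical heart of the reduction'' and defer it to a calibration ``to be handled with care.'' That deferred step \emph{is} the proof; as written, you have a plan with its key gadget missing, so there is a genuine gap.

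The missing idea is to make the one-item deduction a \emph{constant}, independent of how the number-items are split. The paper achieves this by running the envy in the opposite direction from yours: the third agent (Alice) receives only forced auxiliary items, worth $4S+3S+2S+S=10S$ to each twin, whose largest element is worth $4S$ to them, while the twins (Bob, Carl) each receive one forced item worth $5S$ plus their share of the number-items. Since the envied bundle contains no number-items, ``which single item is deleted'' is moot, and EF1 from a twin toward Alice is equivalent to ``own utility $\geq 10S-4S=6S$'', i.e.\ ``my number-items sum to at least $S$''; both twins can satisfy this iff a perfect partition exists, and all other envy relations are trivially EF1. Your direction (agent 3 envies the over-endowed twin) can in fact be completed, but only by the same trick on the other side: plant in each twin's bundle a forced item that agent 3 values at least as much as any number-item (e.g.\ value $S$, with $u_3(g_i)=a_i/2$ so UM still sends every $g_i$ to a twin, and agent 3's own forced bundle worth $S/2$), so the deletion always targets that fixed item and the condition collapses to $\sum_{G_1}a_i\leq S$ --- this is exactly the pin-down you left unspecified, and no capacity gymnastics are needed (the paper uses unbounded capacities). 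Finally, your fallback for RM via the lexicographic cardinalization does not work as stated: replacing utilities by $(xm)^{m-j}$ changes the utility functions and hence changes what EF1 means, and the ordinal data cannot encode the numbers $a_i$ at all, so the \textsc{Partition} structure is destroyed. The paper instead uses a single set of valuations under which UM and RM force the same skeleton, which is the route your ``purely ordinal'' alternative would have to take.
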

\begin{proof}
We reduce from \textsc{Partition}, which is the following decision problem: given $m$ numbers $a_1,\ldots,a_m$ summing up to $2S$, is there a partition of the numbers into two sets with each set of integers summing up to $S$?
Given an instance of \textsc{Partition}, define an item-allocation instance with $m+6$ items: $m$ number-items and 6 extra-items, denoted $e_1,e_2,e_3,e_4,e_5,e_6$. 
Each item should be allocated once. There are $3$ agents with unbounded capacities and the following valuations:
\noindent
\begin{center}
\begin{tabular}{ccc}
\hline 
 Items:   & $m$ number-items & $e_1,e_2,e_3,e_4,e_5,e_6$ \\ 
\hline 
Alice:     & 0 & $5S,4S,3S,2S,S,0$ \\ 
\hline 
Bob, Carl: & $u_j(o_j) = a_j$  & $4S,3S,2S,S,5S,5S$ \\ 
\hline 
\end{tabular} 
\end{center} 

It is possible to give all agents their best item, so any RM allocation must do so and give $e_1,e_5,e_6$ to Alice, Bob and Carl respectively.
Subject to this, it is possible to give one agent her 2nd-best item --- this must be Alice, so she must be given $e_2$. Similarly, Alice must be given $e_3$ and $e_4$. The same is true for any UM allocation.

Now, Bob and Carl have utility $5S$, and they value Alice's bundle as $10S$.
Since the highest item-value for both Bob and Carl is $4S$, the allocation becomes EF1 iff each of them gets a utility of exactly $6S$, iff each of them gets a utility of exactly $S$ from the number-items, iff the allocation of these items corresponds to an equal partition of the numbers.
\end{proof}

\begin{remark}
When there are two agents, checking whether there exists an allocation that is both UM and EF1 is polynomial-time solvable. 
\end{remark}

\begin{proposition}\label{prop:ef1w}
For three agents, computing an allocation that is utilitarian-maximal or rank-maximal within the set of EF1 allocations is NP-hard. 
\end{proposition}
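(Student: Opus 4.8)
The plan is to give a Turing reduction from the existence problem shown NP-complete in the preceding proposition (the NP-completeness result). The observation that drives the reduction is simple: an algorithm that returns a \emph{welfare-optimal} EF1 allocation also reveals the \emph{value} of the best welfare attainable subject to EF1, and this value equals the global (unconstrained) welfare optimum precisely when a welfare-maximal allocation that is also EF1 exists. Since the global optimum is easy to compute, comparing the two values decides the NP-complete existence question, which yields NP-hardness of the optimization problem.

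Concretely, for utilitarian-maximality I would proceed as follows. First, reuse the \textsc{Partition}-based instance $I$ from the preceding proposition. Second, argue that the set of EF1 allocations of $I$ is non-empty, so that the optimization problem is well defined; I would obtain this from Proposition~\ref{rb-is-nef1}(a) after perturbing the zero utilities into sufficiently small positive ones, or by exhibiting one EF1 allocation directly. Third, compute in polynomial time the unconstrained maximum welfare $W^\ast := \max_{q\in\mathcal{A}}\sum_{i\in N}u_i(q(i))$ (a max-weight assignment under the capacity constraints); by definition an allocation is utilitarian-maximal iff its welfare equals $W^\ast$. Fourth, suppose $B$ is a polynomial-time algorithm that, on any instance whose EF1-feasible set is non-empty, returns an EF1 allocation of maximum utilitarian welfare; run $B$ on $I$, obtain $p$, and compute its welfare $w := \sum_{i\in N}u_i(p(i))$. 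Then a UM-and-EF1 allocation exists iff $w = W^\ast$, which in turn holds iff $I$ encodes a yes-instance of \textsc{Partition}. Every step other than the call to $B$ runs in polynomial time, so $B$ would decide an NP-complete problem; hence computing a utilitarian-maximal allocation within the EF1 allocations is NP-hard.

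The rank-maximal case is analogous: I would replace $W^\ast$ by the unconstrained rank-maximal rank vector $r^\ast$ (computable in polynomial time by the rank-maximal algorithms cited earlier), run the corresponding EF1-constrained optimizer to obtain the lexicographically best rank vector achievable under EF1, and compare the two vectors lexicographically, equality holding iff an RM-and-EF1 allocation exists.

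The main obstacle I anticipate is the non-emptiness step: I must guarantee that $I$ admits at least one EF1 allocation so that the optimum is defined, while ensuring that whatever perturbation I use to invoke Proposition~\ref{rb-is-nef1} does not disturb the identity of the welfare-maximal (resp.\ rank-maximal) allocations on which the \textsc{Partition} argument hinges. For UM this is handled by taking the added $\epsilon$ small relative to the integer gaps in the utility table. For RM extra care is needed so that turning tied or zero entries into distinct positive ones respects the original weak order (keeping previously tied or dominated items tied or strictly below), so that the unconstrained rank-maximal structure forcing Alice's bundle $\{e_1,e_2,e_3,e_4\}$ is preserved.
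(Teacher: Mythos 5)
Your proposal is correct and follows essentially the same route as the paper: a Turing reduction in which the hypothetical EF1-constrained optimizer's welfare (resp.\ rank vector) is compared against the polynomial-time computable unconstrained optimum, with equality holding iff a UM+EF1 (resp.\ RM+EF1) allocation exists, which is NP-complete to decide by the preceding proposition. Your additional care about non-emptiness of the EF1 set and perturbation of zero utilities goes beyond what the paper spells out, but it does not change the argument's structure.
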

\begin{proof}
Suppose there exists a polynomial-time algorithm that computes an allocation maximizing the utilitarian welfare within the set of EF1 allocations. Then we can compute the outcome of the algorithm and check whether its total utilitarian welfare is equal to the maximum possible welfare (which is well-known to be polynomial-time computable).
The two values are equal if and only if there exists an allocation that is both UM and EF1. An analogous argument holds for RM and EF1, where instead of the maximum utility we consider the maximum rank-vector. 
\end{proof}

The NP-hardness of computing an allocation that maximizes the utilitarian welfare within the set of EF1 allocations was proven independently and recently by \citet{BJK+19a}. Proposition~\ref{prop:ef1w} is stronger in one respect that it even holds for rank maximality.

We proved that deciding whether there exists a UM+EF1 allocation for 3 or more agent is NP-hard. The same is true if we replace EF1 with the more ``friendly'' concept of PROP1 --- proportionality up to at most a single item. 
An allocation $\pi$ satisfies \emph{proportionality up to one item (PROP1)} if for each agent $i\in N$, 
\begin{itemize} 
\item $u_i(\pi(i))\geq u_{i}(O)/n$; or
\item $u_i(\pi(i))+u_i(o)\geq u_{i}(O)/n$ for some $o\in O\setminus \pi(i)$; or 
\item $u_i(\pi(i))-u_i(o)\geq u_{i}(O)/n$ for some $o\in \pi(i)$. 
\end{itemize}

 \begin{proposition}
 When there are at least three agents, checking whether there exists an allocation that is both UM and PROP1 is NP-complete.
 \end{proposition}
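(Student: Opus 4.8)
The plan is to prove membership in NP and then NP-hardness by reduction from \textsc{Partition}, mirroring the reduction used above for UM and EF1. Membership is immediate: given a candidate allocation, UM can be certified by comparing its utilitarian welfare against the maximum welfare (which, for additive utilities with a single copy per item and unbounded capacities, is obtained by assigning each item to a highest-valuing agent and is thus polynomial-time computable), and PROP1 can be checked directly from its definition. For hardness I start, as in the earlier proof, from an instance $a_1,\dots,a_m$ of \textsc{Partition} with $\sum_j a_j = 2S$, discard the trivial cases (if some $a_j > S$ the instance is a fixed no-instance), and assume each $a_j$ is small relative to $S$.

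The first thing I would check is whether the existing UM+EF1 instance already works for PROP1; it does not, and seeing why pinpoints the real difficulty. In the EF1 reduction each ``number agent'' compares against another agent's concrete bundle, so deleting a single high-value item from that bundle leaves a residual whose slack is exactly the partition gap. PROP1, by contrast, compares each agent's utility only against the fixed threshold $u_i(O)/n$, and its single-item relaxation lets an agent \emph{add} the most valuable item she does not hold. In the old instance Bob can add one of Alice's large items (worth $4S$), which overshoots his proportional share regardless of how the numbers are split, so PROP1 is always satisfied and the reduction collapses. Hence I need a fresh gadget in which (i) the proportional share is large, yet (ii) the best single item an agent can add is tightly capped.

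The fix I would use is a dummy third agent (Alice) together with two gadget item-types. Keep the $m$ number-items valued $a_j$ by both Bob and Carl and $0$ by Alice, so any UM allocation splits them between Bob and Carl; write Bob's number-share as $x_B$ and Carl's as $x_C = 2S - x_B$. Add one \emph{bait} item that Bob and Carl value at $\beta$ (chosen to dominate every $a_j$) but that Alice values at $\beta+1$, and add $K$ \emph{decoy} items valued $2$ by Alice and $1$ by Bob and Carl. Since Alice is the strict top-valuer of the bait and of every decoy, every UM allocation gives her exactly those items and leaves Bob and Carl with only their number-shares; Alice then holds every item she values positively, so she is proportional and never binding. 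For each of Bob and Carl the only relevant PROP1 clause is the ``add one item'' clause, and because the bait dominates everything outside their bundle, the best addable item has the fixed value $\beta$. A short computation then reduces PROP1 for Bob to $x_B \ge u_B(O)/3 - \beta$; choosing $\beta$ and $K$ (for instance $\beta = 2S$ and $K = 5S$, giving $u_B(O) = 9S$) makes this threshold exactly $S$, and symmetrically PROP1 for Carl becomes $x_C \ge S$. As $x_B + x_C = 2S$, both hold simultaneously iff $x_B = x_C = S$, i.e. iff the numbers admit an equal partition.

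The main obstacle I anticipate is precisely the overshoot phenomenon above: forcing the proportional-share threshold to bite while preventing the single-item slack from trivializing it. Decoupling ``make the share large'' (many small decoys routed to the dummy agent) from ``cap the slack'' (one dominating bait) is what resolves it, and the remaining work is routine: verifying the tuned arithmetic so that each ``number agent'' satisfies PROP1 exactly when its number-share is at least $S$, confirming that the UM structure is forced, and observing that Alice (and any further padding agents, who value everything $0$) are satisfied automatically. This yields that a UM and PROP1 allocation exists iff the \textsc{Partition} instance is a yes-instance, establishing NP-completeness for three (hence at least three) agents.
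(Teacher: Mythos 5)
Your overall design is the same as the paper's: a dummy agent (Alice) who is the strict top-valuer of all gadget items, so that UM forces Bob and Carl to receive only number-items; a cap on the best single item either of them can add; and arithmetic tuned so that PROP1 for Bob and Carl holds iff each receives at least $S$ worth of number-items, i.e., iff the numbers admit an equal partition. (The paper does this with four identical extra items worth $3S/2$ to Alice and $S$ to Bob and Carl, so $u_B(O)=6S$, the share is $2S$, the cap is $S$, and the threshold is $S$.) However, your concrete instantiation has a genuine flaw: you introduce $K = 5S$ decoy items. \textsc{Partition} is only weakly NP-hard --- the $a_j$ are encoded in binary, so $S$ can be exponential in the instance size --- and a reduction that writes down $5S$ separate items is therefore not a polynomial-time reduction, and it establishes no NP-hardness at all (a pseudo-polynomial reduction from \textsc{Partition} is vacuous, since \textsc{Partition} itself is solvable in pseudo-polynomial time). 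The flaw is forced by your choice of unit value for the decoys: to reach the required total decoy value $S + 2\beta = 5S$ with items worth $1$ each to Bob and Carl, you need $5S$ of them.

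The repair is easy and essentially collapses your construction into the paper's: keep the \emph{number} of gadget items constant and let their \emph{values} scale with $S$. You need total decoy value $5S$ for Bob/Carl with each decoy worth at most $\beta = 2S$ to them (so the bait remains the best addable item) and strictly more to Alice (so UM still routes every decoy to her); for instance, three decoys worth $5S/3$ to Bob and Carl and $2S$ to Alice. The paper's four extra items at values $(3S/2, S)$ simply merge your bait and decoy roles into one item type. Separately, your closing claim that extra padding agents ``who value everything $0$'' handle $n > 3$ is wrong as stated: adding agents changes the proportional denominator, so Bob's condition becomes $x_B \geq u_B(O)/n - \beta$, which for $n = 4$ is $x_B \geq S/4$ and no longer encodes \textsc{Partition}; the gadget values must be re-tuned for each fixed $n$ (a point on which the paper's own proof is also silent).
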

  \begin{proof}
 Similarly to the proof for UM+EF1, we reduce from \textsc{Partition}. Given an instance of \textsc{Partition} with $m$ numbers, define an item-allocation instance with
 $m+4$ items: $m$ number-items plus $4$ extra-items.
 The agents' valuations are:
 \noindent
 \begin{center}
 \begin{tabular}{|c|c|c|}
 \hline 
 Items:          & $m$ number-items & $e_1, e_2, e_3, e_4$ \\ 
 \hline 
 Alice:     & 0 & $3S/2, 3S/2, 3S/2, 3S/2$ \\ 
 \hline 
 Bob, Carl: & $v(o_i) = a_i$ & $S, S, S, S$ \\ 
 \hline 
 \end{tabular} 
 \end{center}
 All agents value the set of all items as $6 S$, so the proportional share per agent is $2 S$. Since the largest item-value for both Bob and Carl is $S$, the allocation is PROP1 iff both Bob and Carl get at least $S$ from the number-items, iff the iff the allocation of these items corresponds to an equal partition of the numbers.
  \end{proof}

\section{The W-CRR Algorithm}
\label{sec:CRR}
In this section, we present the W-Constrained Round Robin (W-CRR) Algorithm. It is parametrized by a target welfare measure W.

While we are particularly interested in the case in which W is UM or RM, we note that W can be any other welfare notion, for example: maximum egalitarian welfare (Egal), maximum Nash welfare (Nash), utilitarian welfare that is at least some given threshold, Pareto optimality, etc.
The idea of W-CRR is to combine the required welfare guarantee with a round-robin sequential allocation.

%
%

	
The algorithm \emph{tries} to find an allocation as a result of some recursively-balanced and if possible a round-robin policy. The policy is left unspecified at the start of the algorithm and is searched for dynamically. While trying to obtain an allocation as a result of an RB sequence, the algorithm makes regular checks on whether the partial allocation obtained so far can be completed so as to satisfy W. If this is not possible, then the algorithm carefully relaxes the requirement of a recursively-balanced picking sequence. In this relaxation, the algorithm prioritizes agents with smaller allocations to make a pick. 
	
The algorithm uses an oracle called \emph{W-Completion} that takes as input a partial allocation $p$ and finds a feasible complete allocation that extends $p$ and satisfies W, or reports that no such extension exists. The oracle solves the following computational problem, which we also refer to as W-Completion.

\pbDef{{\footnotesize W-Completion}}
{\footnotesize $(N,O,\pref,u,c)$ and a partial allocation $p$}
{\footnotesize Does there exist a complete and feasible allocation $p'$ that extends $p$ and satisfies W? (yes/no)}

For several efficiency measures such as rank-maximality and utilitarian-maximality, {W-Completion} can be reduced to simply finding an allocation which satisfies W; see Lemma \ref{lem:reduction}.
If W is null, that is we do not impose any efficiency constraints, then W-CRR reduces to round-robin sequential allocation. 
The W-CRR algorithm is specified as Algorithm~\ref{algo:CRR}.
\ifdefined\JOURNALVERSION
Here is an illustrative example for how the algorithm works.
\begin{example}[W-CRR]
Consider an instance with 4 agents and 6 items, where each item should be assigned to two different agents and each agent gets three items. The following table shows the utility of each item to each agent:
\begin{tabular}{ccccccc}
\hline
\textbf{Item}: & $o_1$ & $o_2$ & $o_3$ & $o_4$ & $o_5$ & $o_6$ \\
\hline
\textbf{Agents 1,2,3}: & 6 & 5 & 4 & 3 & 2 & 1 \\
\hline
\textbf{Agent 4}: & 2 & 6 & 5 & 4 & 3 & 1 \\
\hline
\end{tabular}

Let the algorithm target at maximum utility i.e. $W$ is maximum total utility. For simplicity, we assume that the round-robin order is from agent $1$ to $4$ if there is a tie. Then the flow of algorithm is as following:
\begin{itemize}
\item Agent 1 gets $o_1$, Agent 2 gets $o_1$, Agent  3 gets $o_2$, Agent 4 gets $o_2$.
\item Agent 1 gets $o_3$, Agent 2 gets $o_3$, Agent  3  gets $o_4$, Agent  4 gets $o_4$.
\item Agent 1 gets $o_5$, Agent 4 gets $o_5$ (at this time, Agents 2 and 3 cannot get their first choice $o_5$, otherwise the maximum utility constraint would be broken), Agent 2 gets $o_6$, Agent 3 gets $o_6$.
\end{itemize}
The final allocation is EF1, while maximizing total utility subject to the capacity constraints.
\else
\fi

Note that during the running of the algorithm W-CRR, as we grow allocation $p$, we always ensure that there exists an extension of $p$ that also satisfies W. W-CRR needs to call the W-Completion oracle once for each agent-item pair, for a total of $m n$ times. Hence the following properties are obvious:
\begin{proposition}
For any welfare measure W and instance $I=(N,O,\pref,u,c)$:
(1) If $I$ has at least one allocation that satisfies W, then W-CRR returns an outcome that satisfies W.
(2) If W-Completion can be solved in time $T(I)$, then W-CRR takes time $mn\cdot T(I)$. 
\end{proposition}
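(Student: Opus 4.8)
The plan is to prove part~(1) by maintaining an invariant throughout the execution and part~(2) by a direct counting argument. The key object is what I would call the \emph{completability invariant}: at every stage of the algorithm, the current partial allocation $p$ admits at least one extension to a complete feasible allocation satisfying W. This is precisely the property flagged just before the statement (``as we grow allocation $p$, we always ensure that there exists an extension of $p$ that also satisfies W''), so the proof mostly amounts to making that invariant precise and verifying it is preserved.

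For part~(1), I would establish the invariant by induction on the number of items placed so far. The base case is the empty partial allocation, which by the hypothesis of part~(1) extends to a W-satisfying allocation, since at least one such allocation exists by assumption. For the inductive step, recall that before committing any agent-item assignment, W-CRR queries the W-Completion oracle on the candidate partial allocation and commits to that assignment only if the oracle confirms the result still extends to a complete W-satisfying allocation. Hence each committed step preserves the invariant. Upon termination $p$ is complete, and since the only extension of a complete allocation is itself, the invariant forces $p$ to satisfy W.

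The step I expect to be the crux is arguing that the algorithm never gets stuck: whenever $p$ is not yet complete, there must exist at least one agent-item pair whose assignment the oracle accepts, so that progress is always possible. This follows from the invariant itself. If $p$ has a W-satisfying extension $p^{*}$ but $p$ is incomplete, then $p^{*}$ places at least one item that $p$ does not; choose such a pair $(i,o)$ with $o \in p^{*}(i)\setminus p(i)$. The partial allocation obtained from $p$ by adding this single assignment is still a sub-allocation of $p^{*}$, so it extends to $p^{*}$ and the oracle returns ``yes''. Therefore an acceptable move always exists, and the recursively-balanced/round-robin relaxation logic merely governs \emph{which} acceptable move is chosen --- this choice affects the fairness guarantee of Proposition~\ref{rb-is-nef1}, not feasibility or the attainment of W.

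For part~(2), I would simply count oracle calls. As already noted in the text, W-CRR consults W-Completion at most once per agent-item pair before deciding whether that assignment may be made, giving at most $mn$ calls, each costing $T(I)$. The remaining bookkeeping --- tracking committed assignments, the per-agent item counts used to enforce and, when necessary, relax recursive balance, and the selection of the next picker --- is polynomial and can be absorbed into the oracle cost, so the total running time is $mn\cdot T(I)$ as claimed.
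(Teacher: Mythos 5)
Your overall skeleton (completability invariant plus oracle-call counting) is exactly the reasoning the paper itself gestures at --- the paper offers no explicit proof, declaring both claims ``obvious'' from the invariant remark stated just before the proposition --- so your approach is the right one. However, your formalization of part (1) has a genuine gap at the termination step. Your ``never gets stuck'' argument shows that whenever $p$ is incomplete and has a W-satisfying completion $p^{*}$, \emph{some} pair $(i,o)$ with $o \in p^{*}(i)\setminus p(i)$ would be accepted by the oracle if queried. But W-CRR does not query arbitrary pairs: in each iteration it queries only pairs with $i \in N^{Min}$ and $o \in {E_i^1}'$, the top \emph{remaining} equivalence class of $i$. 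When all of these queries fail, the else-branch does not ``choose a different acceptable move'' (your characterization of the relaxation logic is wrong on this point) --- it permanently deletes the classes ${E_i^1}'$ from the lists $L_i'$, and agents whose lists become empty are removed from $N^{active}$. So it is not enough that an acceptable pair exists somewhere; you must rule out that this pair was discarded in an earlier iteration. Otherwise the loop could a priori terminate via $N^{active}=\emptyset$ with $p$ incomplete, and then the invariant only tells you $p$ is \emph{completable}, not that the returned allocation satisfies W (an incomplete allocation generally violates the lower capacity constraints, hence is not even feasible).

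The missing lemma is a monotonicity/safety property of the deletions: if W-Completion$(p'\cup\{(i,o)\})$ answers ``no'', then W-Completion$(p\cup\{(i,o)\})$ answers ``no'' for every later partial allocation $p \supseteq p'$, since any W-satisfying completion of $p\cup\{(i,o)\}$ would in particular be a W-satisfying completion of $p'\cup\{(i,o)\}$. Consequently a deleted pair $(i,o)$ is used by \emph{no} W-satisfying completion of any subsequent partial allocation, so every pair in $p^{*}\setminus p$ is still present in the lists: the corresponding agents are still in $N^{active}$, a feasible extension still exists (as you correctly argue via $p\cup\{(i,o)\}\subseteq p^{*}$), and therefore neither termination condition can trigger while $p \subsetneq p^{*}$. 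With this lemma inserted, your induction and your conclusion go through. Note that the same monotonicity fact is also what makes the part (2) count sound: querying each agent-item pair at most once (the paper's $mn$ bound, which you repeat) is legitimate precisely because a rejection, once received, persists for the rest of the run and the pair may be skipped thereafter.
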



W-CRR is inspired by the sequential allocation mechanism that uses a round-robin picking sequence. Among the agents who have the least number of items, it tries to give an item to the agent with the lowest index.

Given an instance $I$, we denote by $W(I)$ the set of allocations that satisfy target $W$.
\begin{proposition}
\label{crr-is-rr}
If $W(I)$ contains the set of all balanced allocations,
then W-CRR is equivalent to a round-robin sequential allocation rule.
\end{proposition}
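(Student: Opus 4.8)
The plan is to argue by induction on the number of items allocated so far that W-CRR reproduces, pick for pick, the output of the round-robin rule whose agent order is the one induced by W-CRR's own tie-breaking. The key observation is that the \emph{only} mechanism by which W-CRR can ever deviate from a pure round-robin choice is a ``no'' answer from the W-Completion oracle, which would force the active agent to skip her most preferred available item. Thus the whole proof collapses to showing that, along the round-robin trajectory, the oracle never blocks the natural pick.

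First I would check that the agent-selection rule of W-CRR coincides with a fixed round-robin order. Since W-CRR always hands the turn to the lowest-indexed agent among those currently holding the fewest items, and each turn increases that agent's count by exactly one, the turns are taken in the cyclic order $1,2,\ldots,n,1,2,\ldots$, which is precisely round-robin under the identity permutation. Next, assuming inductively that through step $k$ the partial allocation $p_k$ produced by W-CRR equals the one produced by this round-robin rule, I would examine step $k+1$: the active agent $i$ proposes her most preferred still-available item $o$, and W-CRR calls W-Completion on $p_k \cup \{(i,o)\}$.

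The crux is exhibiting a witness completion. Continuing the round-robin process from $p_k \cup \{(i,o)\}$ to the end yields a complete feasible allocation, and because every RB (hence RR) picking sequence returns a balanced allocation, this completion is balanced. By the hypothesis that $W(I)$ contains all balanced allocations, this completion satisfies W, so it is a valid extension and W-Completion answers ``yes''. Moreover, since the oracle already succeeds on the \emph{most} preferred available item, the algorithm's search for an admissible item stops there, so W-CRR commits $o$ to agent $i$, exactly as round-robin does. This closes the induction, and as the two rules agree at every step they return the same allocation.

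The main obstacle I anticipate is this witness step: I must be careful that simply continuing round-robin from $p_k\cup\{(i,o)\}$ always terminates in a complete, capacity-feasible, balanced allocation rather than getting stuck, so that the structural fact ``RB sequences yield balanced feasible allocations'' is genuinely applicable. This is also exactly where the hypothesis on $W(I)$ is just strong enough: it certifies that the balanced witness lies in $W(I)$ without requiring any understanding of the particular welfare measure W. A secondary point to verify is tie-breaking consistency between W-CRR's ``lowest index among least-allocated'' rule and the fixed round-robin permutation, which I handle by the counting argument in the second paragraph.
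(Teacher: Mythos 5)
The paper states Proposition~\ref{crr-is-rr} without any proof, so there is no ``paper approach'' to compare against; judged on its own, your core idea --- use the round-robin continuation of the current partial allocation as a balanced witness, so that the hypothesis on $W(I)$ forces W-Completion to answer ``yes'' --- is certainly the intended one. But your reduction of everything to ``the oracle never blocks the natural pick'' has a genuine gap: it presumes that when agent $i$ has the turn, W-CRR offers her \emph{most preferred still-available item}. Algorithm~\ref{algo:CRR} as written does not do this. Candidate items are drawn only from ${E_i^1}'$, the top class of a reduced list from which allocated items are never deleted; items that are unavailable (already held, or at full capacity) therefore also produce ``no'' answers, and when every agent in $N^{Min}$ is blocked this way, the else-branch deletes the top class of \emph{every} agent in $N^{Min}$ simultaneously. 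Since different agents may need different numbers of such deletions before an available item surfaces, a higher-indexed agent can overtake a lower-indexed one even when $W(I)$ contains all balanced allocations. Concretely: two agents, six single-copy items, no agent capacities, W null, preferences $a \succ b \succ x \succ y \succ z \succ w$ (agent 1) and $b \succ x \succ y \succ z \succ a \succ w$ (agent 2). After agent 1 takes $a$ and agent 2 takes $b$, both top classes are blocked and one synchronized deletion leaves agent 1 facing $\{b\}$ (taken) but agent 2 facing $\{x\}$ (free), so agent 2 picks \emph{before} agent 1; the run ends with $\{a,y,z\},\{b,x,w\}$, which matches neither round-robin order's outcome, $\{a,x,z\},\{b,y,w\}$ and $\{a,y,w\},\{b,x,z\}$. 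So you must either prove a lemma reconciling the list mechanics with your description of the proposals, or explicitly adopt the (evidently intended, cf.\ the unused set $O'$) reading in which held or exhausted items are purged from the lists --- under that reading your induction does go through in the single-copy setting.

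The second gap is the one you flag yourself but never close: that continuing round robin from $p_k\cup\{(i,o)\}$ terminates in a \emph{complete feasible} allocation. The hypothesis on $W(I)$ does not supply this, and with heterogeneous item capacities it is simply false. Take two agents who must each receive exactly three items, items $a,b$ of capacity two, items $c,d$ of capacity one, and preferences $a\succ b\succ c\succ d$ and $c\succ d\succ a\succ b$: greedy round robin reaches the prefix $\{a,b\},\{c,d\}$ and is stuck, because the remaining copies of $a$ and $b$ can only go to agent 2, who has room for just one of them; no balanced complete feasible witness extends this prefix. (In this instance W-CRR, steered by the oracle, deviates from both round-robin orders and still completes, so the proposition itself implicitly needs a hypothesis ruling this out.) Your argument is therefore complete only in the unconstrained single-copy setting; for the paper's general capacitated model you need either an additional structural assumption, such as the identical-capacity regime the paper highlights, or a separate combinatorial argument that every round-robin prefix extends to a balanced complete feasible allocation.
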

Combining Propositions \ref{crr-is-rr} and \ref{rb-is-nef1} implies that, if $W(I)$ contains all balanced allocations, and if all utilities are positive or all are negative, then W-CRR returns an allocation that is not just EF1 but also NEF1. 

	
%
%
	
\begin{algorithm}[h!]
 	\caption{W-CRR}
 	\label{algo:CRR}
\begin{algorithmic}
\REQUIRE $I=(N,O,\pref,u,c)$ and a target welfare measure $W$ that can be achieved by at least one feasible allocation.  
\ENSURE Allocation $p$
\end{algorithmic}
	\begin{algorithmic}[1]
	\STATE Initialize $p$ to empty allocation \COMMENT{$p$ will be gradually extended.}
	\STATE $j\longleftarrow 1$
\STATE $O'\longleftarrow O$.
\STATE For each $i\in N$, initialize $L_i'=({E_i^1}', {E_i^2}',...)$ to $L_i=(E_i^1,E_i^2,...)$
\STATE $N^{active}\longleftarrow N$ \COMMENT{$N^{active}$ is the set of agents with non-empty preferences lists $L_i'$s} 
\WHILE{$N^{active}\neq \emptyset$ and there exists a feasible extension of $p$ by giving a new item to some agent}
\STATE $N^{Min}\longleftarrow \{i\in N^{active}\midd \not\exists j\in N^{active} \text{ s.t. } |p(i)|>|p(j)|\}$ \\
\COMMENT{$N^{Min}$ is the set of agents among $N^{active}$ who have least number of items} 
\IF{there exists an $i\in N^{Min}$ and $o\in {E_i^1}'$ such that { W-Completion}$(p\cup \{(i,o)\})$ is yes \COMMENT{W-Completion takes into account capacity constraints $c$}} 
\STATE Choose such an $i$ with the lowest index. 
\STATE $p\longleftarrow p\cup \{(i,o)\}$
\COMMENT{Extend the current allocation by assigning item $o$ to agent $i$}
\ELSE 
\STATE For each $i\in N^{Min}$, remove ${E_i^1}'$ from $L_i'$ 
and rename the equivalence classes of $L_i'$ \COMMENT{rename so that the next equivalence class is now named ${E_i^1}'$ and so on.}
\STATE $N^{active}\longleftarrow \{i\in N^{active}\midd L_i'\neq \text{empty}\}$ 
\ENDIF
	 	\ENDWHILE
	\RETURN $p$.
\end{algorithmic}
\end{algorithm}

 \begin{example}[Example illustrating W-CRR]
 Consider an instance with 4 agents and 6 items, where each item should be assigned to two different agents and each agent gets three items. The following table shows the utility of each item to each agent:

 \begin{tabular}{ccccccc}
 \hline 
 \textbf{Item}: & $o_1$ & $o_2$ & $o_3$ & $o_4$ & $o_5$ & $o_6$ \\
 \hline 
 \textbf{Agents 1,2,3}: & 6 & 5 & 4 & 3 & 2 & 1 \\ 
 \hline 
 \textbf{Agent 4}: & 2 & 6 & 5 & 4 & 3 & 1 \\ 
 \hline 
 \end{tabular}

 Let the algorithm target at maximum utility i.e. $W$ is maximum total utility. For simplicity, we assume that the round-robin order is from agent $1$ to $4$ if there is a tie. Then the flow of algorithm is as following:
 \begin{itemize}
 \item Agent 1 gets $o_1$, Agent 2 gets $o_1$, Agent  3 gets $o_2$, Agent 4 gets $o_2$.
 \item Agent 1 gets $o_3$, Agent 2 gets $o_3$, Agent  3  gets $o_4$, Agent  4 gets $o_4$.
 \item Agent 1 gets $o_5$, Agent 4 gets $o_5$ (at this time, Agents 2 and 3 cannot get their first choice $o_5$, otherwise the maximum utility constraint would be broken), Agent 2 gets $o_6$, Agent 3 gets $o_6$.
 \end{itemize}
 The final allocation is EF1, while maximizing total utility.
 \end{example}

\subsection*{RM-CRR and UM-CRR}
Since the W-CRR algorithm depends on a W-Completion oracle, it is interesting to know for what W such W-Completion oracles can be implemented efficiently. 
We now show two cases in which W-Completion can be reduced to just computing a single allocation that satisfies W: they are rank-maximality and utilitarian-maximality. 
Rank-maximal allocations with arbitrary upper capacities can be computed efficiently by polynomial-time combinatorial algorithms \citep{Palu13a}.
Similarly, it is well-known that utilitarian maximal allocations can be computed in polynomial time even with capacity constraints~(see e.g., \citep{KoVy06a}). 
\begin{lemma}
\label{lem:reduction}
There exists a linear-time reduction -- 
(a) from {UM-Completion} to computing a utilitarian-maximal allocation;
(b) from {RM-Completion} to computing a rank-maximal allocation.
\end{lemma}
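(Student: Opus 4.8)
The plan is to decide each completion question by \emph{forcing} the pairs of the partial allocation $p$ into an optimal allocation of a slightly modified instance, and then comparing the resulting constrained optimum against the unconstrained optimum. Write $k$ for the number of (agent, item) pairs in $p$, and say that a feasible allocation $q$ \emph{extends} $p$ if all $k$ pairs of $p$ appear in $q$. The guiding observation is that a UM (resp.\ RM) allocation extending $p$ exists if and only if the best utility (resp.\ best rank vector) attainable by an allocation extending $p$ equals the global optimum; since the constrained value never beats the global one, I only need to compute both quantities and test for equality.

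For part (a) I would first call the oracle once on $I=(N,O,\pref,u,c)$ to obtain a utilitarian-maximal allocation and record its total utility $U^*$. I then build $I'$ by adding a large constant $M$ to $u_i(o)$ for every pair $(i,o)\in p$, leaving all other utilities, the item set, and all capacities unchanged; choosing $M$ larger than $\sum_{(i,o)}\overline{c}_o\cdot|u_i(o)|$ guarantees that in $I'$ the term $M\cdot(\text{number of }p\text{-pairs used})$ dominates the comparison between any two feasible allocations. A second oracle call returns a utilitarian-maximal allocation $q^*$ of $I'$. Because $M$ is dominant, $q^*$ first maximises the number of $p$-pairs it contains and only then the original utility: if some feasible allocation extends $p$ then $q^*$ extends $p$ and, subject to that, maximises $u$ over all extensions; if $q^*$ fails to contain all $k$ pairs then no feasible extension exists and the answer is ``no''. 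In the former case the best utility among extensions is exactly $u(q^*)$ computed in the \emph{original} instance, so I answer ``yes'' precisely when $u(q^*)=U^*$. Both modifications and both comparisons run in linear time, so this is a linear-time reduction using a constant number of oracle calls.

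For part (b) the same template applies, but the forcing must keep the invoked oracle a genuine \emph{rank-maximal} solver. I would insert, for each agent $i$, a new top equivalence class $F_i=\{o\midd (i,o)\in p\}$ above $E_i^1$ and delete these forced items from their old classes, shifting the remaining classes down by one; this leaves $N$, $O$, and all capacities intact, so feasibility is unchanged. In the resulting instance $I''$ the rank-$1$ count of any allocation equals the number of $p$-pairs it uses, so a rank-maximal allocation $q^*$ of $I''$ primarily maximises that count and hence extends $p$ whenever some feasible extension exists (otherwise $q^*$ misses a pair and the answer is ``no''). After one call on $I$ to obtain the globally optimal rank vector $r^*$ and one call on $I''$ to obtain $q^*$, I answer ``yes'' iff the rank vector of $q^*$ \emph{measured in the original instance} equals $r^*$.

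The main obstacle is justifying this last comparison for RM: I must show that optimising the $I''$-rank vector among the extensions of $p$ coincides with optimising the original rank vector among those extensions. The key point is that every extension of $p$ contains exactly the same forced pairs, so these pairs contribute an identical constant to the rank vector at each rank, both in $I$ and in $I''$; hence the two lexicographic orders agree on the set of extensions, and the $I''$-rank-maximal extension returned by the oracle is also rank-maximal in $I$ among extensions. This is exactly the lexicographic / exponential-weight view of rank-maximality recalled earlier, with the forced pairs treated as a rank strictly above all genuine ranks. A minor technicality is that $F_i$ is empty for agents with no forced items; allowing such an empty top class does not affect the rank-vector computation $r_j(p)=\sum_{i}|E_i^j\cap p(i)|$, so the argument goes through unchanged.
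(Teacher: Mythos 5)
Your proposal is correct, but it takes a genuinely different route from the paper. The paper's proof works by \emph{subtracting} the partial allocation from the instance: it decreases the upper and lower capacities according to $p$, runs the oracle once on this residual instance to get $q$, and answers ``yes'' iff the combined welfare $w_p + w_q$ matches the global optimum $w_*$ (with part (b) handled by the remark that RM is just UM with a vector-valued, lexicographically ordered welfare). You instead keep the instance whole and \emph{force} the pairs of $p$ into the optimum --- via a big-$M$ utility bonus for UM, and via an artificial top equivalence class for RM --- then check that the returned optimum both extends $p$ and matches the global optimum. Both arguments hinge on the same separability observation (among extensions of $p$, the forced pairs contribute a fixed constant, so constrained and unconstrained orderings agree), so the correctness cores are equivalent. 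What the paper's version buys is simplicity: one oracle call on a strictly smaller instance, no large numbers, and no tampering with the preference structure. What your version buys is that capacities and feasibility are left completely untouched, so side constraints such as ``an agent may receive at most one copy of an item'' are enforced automatically by the oracle, whereas the residual-instance construction must explicitly exclude pairs already in $p$.

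Two caveats in your write-up deserve attention, though neither is fatal. First, your bound on $M$ is written as $\sum_{(i,o)}\overline{c}_o\cdot|u_i(o)|$ immediately after you introduce the pairs of $p$; if that sum ranges only over the pairs in $p$, it is too small (a single low-value forced pair cannot dominate large utility differences elsewhere), so you should state explicitly that the sum ranges over \emph{all} of $N\times O$. Second, your RM construction needs every agent to receive a new top class --- empty for agents with no forced items --- so that ranks stay aligned across agents; if agents without forced items kept their original numbering, the first coordinate of the $I''$-rank vector would mix forced pairs with genuine first choices and the priority argument would break. You acknowledge this, but it silently assumes the rank-maximal oracle accepts instances whose equivalence-class structure is given by (possibly empty) indexed classes rather than by a genuine weak order; this holds for the standard exponential-weight reduction the paper cites, but it is an assumption about the oracle's interface that the paper's capacity-decreasing reduction never needs.
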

\begin{proof}
\textbf{For part (a)}, 
consider a partial allocation $p$ that respects the upper capacities. 
Based on $p$, decrease the upper and lower capacities, and find a utilitarian-maximal allocation $q$ for the decreased capacities.
Let $r$ be the allocation that combines $p$ and $q$. 
Let $w_p, w_q, w_r$ be the utilitarian welfare of $p,q,r$ respectively, so that $w_r = w_p+w_q$.
Let $w_*$ be a maximal utilitarian welfare in the entire instance (note that $w_*$ can be calculated once per instance --- it need not be calculated in each call to UM-Completion).

If $w_r \geq w_*$. then UM-Completion returns ``yes'', since $r$ satisfies the (original) capacity constraints and attains the maximum utilitarian welfare.
If $w_r < w_*$ then UM-Completion returns ``no'', 
since if there were a completion $q'$ of $p$ 
such that the completed allocation is utilitarian-maximal, then we would have $w_p + w_{q'}\geq w_*  > w_r = w_p + w_{q}$, which would imply $w_{q'} > w_q$, contradicting the maximality of $q$ with the decreased capacities.

\textbf{For part (b)}, 
note that a rank-maximal allocation is just like a utilitarian-maximal allocation with a vector-valued welfare-function. 
%
\end{proof}	

From the lemma above, we get the following propositions.
\begin{proposition}
{W-Completion} can be solved in polynomial time when W is (a) utilitarian-maximality, or (b) rank-maximality.
{W-CRR} can be computed in polynomial time when W is (a) utilitarian-maximality, or (b) rank-maximality.
\end{proposition}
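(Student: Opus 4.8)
The plan is to derive both claims by composing results already established in the excerpt, so the argument is essentially bookkeeping rather than a fresh construction. First I would handle the two W-Completion claims by invoking Lemma~\ref{lem:reduction}. Part (a) of that lemma gives a linear-time reduction from UM-Completion to the problem of computing a single utilitarian-maximal allocation, and part (b) gives the analogous linear-time reduction from RM-Completion to computing a single rank-maximal allocation. It therefore suffices to cite the known polynomial-time solvability of these two underlying problems: utilitarian-maximal allocations under capacity constraints are computable in polynomial time via assignment/min-cost-flow techniques (\citep{KoVy06a}), and rank-maximal allocations with upper capacities are computable in polynomial time by the combinatorial algorithms of \citep{Palu13a}. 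Composing a linear-time reduction with a polynomial-time solver yields a polynomial-time decision procedure, which gives parts (a) and (b) for W-Completion. I would also flag the small bookkeeping point already visible in the lemma's proof: the quantity $w_*$ (the global maximal utilitarian welfare, and its rank-vector analogue) is computed once per instance rather than once per oracle call, so the marginal cost of each call is just that of one capacitated UM (resp.\ RM) computation.

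Next I would turn to the two W-CRR claims, appealing to the earlier observation that W-CRR invokes the W-Completion oracle at most once per agent-item pair, hence at most $mn$ times, and performs only light bookkeeping between calls (maintaining $N^{active}$, $N^{Min}$, and the truncated preference lists $L_i'$). Consequently, if W-Completion runs in time $T(I)$, then W-CRR runs in time $mn\cdot T(I)$ plus lower-order overhead. Substituting the polynomial bound on $T(I)$ obtained in the first part, and observing that $mn$ is itself polynomial in the input size, shows that both UM-CRR and RM-CRR run in polynomial time, completing the proposition.

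I do not expect a genuine obstacle here, since every ingredient is already in hand; the proof is a two-step composition plus a counting argument. The only point that warrants mild care is in the rank-maximal case: rather than leaning on the equivalent lexicographic-utility encoding mentioned earlier, whose weights $(xm)^{m-j}$ are of polynomial bit-length but whose na\"ive handling could blur the running-time analysis, I would invoke the native combinatorial rank-maximal solver of \citep{Palu13a} directly, which sidesteps the issue entirely and keeps the polynomial bound transparent.
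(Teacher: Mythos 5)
Your proposal is correct and follows essentially the same route as the paper, which states this proposition as an immediate corollary of Lemma~\ref{lem:reduction}, the cited polynomial-time algorithms for capacitated utilitarian-maximal \citep{KoVy06a} and rank-maximal \citep{Palu13a} allocations, and the earlier observation that W-CRR makes at most $mn$ calls to the W-Completion oracle, giving a running time of $mn\cdot T(I)$. Your additional remarks (computing $w_*$ once per instance, and preferring the native rank-maximal solver over the lexicographic-utility encoding) are sensible bookkeeping but not points of divergence from the paper's argument.
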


We note that CRR aligns well with Borda welfare. 

\begin{proposition}
Under strict preferences, positive Borda utilities and items having identical capacities, if the set of feasible allocations contains all balanced allocations, then W-CRR for any target social welfare $W$ gives a 2-approximation of maximum total Borda welfare. 
\end{proposition}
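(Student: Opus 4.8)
The plan is to reduce the claim to the behaviour of a pure round-robin rule and then show that round robin is a $2$-approximation for Borda welfare. The returned allocation $p$ is always \emph{balanced}, because the while-loop only ever hands an item to an agent currently holding the fewest items; and in the key case where the welfare target does not override these greedy picks --- in particular whenever $W(I)$ contains all balanced allocations, so that by Proposition~\ref{crr-is-rr} W-CRR \emph{is} a round-robin rule --- the picks are exactly those of round robin. Two structural facts then drive the analysis: (i) the Borda values an agent receives are non-increasing across its successive turns (strict preferences and a shrinking pool of available items), and (ii) an agent accepts an item of Borda value below a threshold $v$ only when every item it values at least $v$ is either already in its bundle or exhausted (its capacity is used up). Fact (ii) is the round-robin analogue of the greedy property behind Proposition~\ref{rb-is-nef1}.

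Next I would pass to a \emph{layer-cake} (threshold) decomposition of Borda welfare. For a threshold $v\in\{1,\dots,m\}$ and an allocation $q$, let $N_{\ge v}(q)$ denote the number of assigned pairs $(i,o)$ with $u_i(o)\ge v$; since each item of value $w$ contributes to $N_{\ge v}$ for exactly the $w$ thresholds $v\le w$, the total Borda welfare of $q$ equals $\sum_{v=1}^{m} N_{\ge v}(q)$. Hence it suffices to prove, threshold by threshold, that $N_{\ge v}(p)\ge \tfrac12\,N_{\ge v}(p^{*})$, where $p^{*}$ is a Borda-optimal allocation; summing over all $v$ then yields the $2$-approximation.

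For a fixed $v$, I would set up the bipartite ``high-value'' graph whose edges are the pairs $(i,o)$ with $u_i(o)\ge v$, with capacity $\overline{c}_i$ on each agent and the common item capacity on each item. The high-value assignments of $p$ and of $p^{*}$ are both $b$-matchings in this graph, of sizes $N_{\ge v}(p)$ and $N_{\ge v}(p^{*})$. I would then charge every high-value edge of $p^{*}$ to a high-value edge of $p$, routing the edge $(i,o)$ either to a high-value edge of $p$ incident to the \emph{same agent} $i$, or to a high-value copy of the \emph{same item} $o$ in $p$. Fact (ii) guarantees a valid target: if agent $i$ is not saturated with value-$\ge v$ items in $p$, then it holds some below-$v$ item, so its unheld high items --- in particular $o$ --- are exhausted in $p$, which supplies item-side copies to charge to. The goal is that no high-value edge of $p$ is charged more than twice in total, which gives $N_{\ge v}(p^{*})\le 2\,N_{\ge v}(p)$.

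The hard part is precisely this load bound, and it is where the item multiplicities bite. Round robin can \emph{waste} a copy of a popular item on an agent who values it below $v$ while a different high-valuer of that item is still unsaturated (a small Borda instance with capacity one already exhibits this), so the naive ``maximal $b$-matching'' shortcut fails and an item-side charge may land on a copy that is itself below $v$; moreover one cannot simply split the load as one agent-charge plus one item-charge, since a single high-value edge of $p$ may legitimately absorb two agent-charges. Making the argument go through requires a careful global accounting that balances agent-side and item-side charges so that the wasted copies are absorbed within the factor $2$, and I expect this to occupy most of the proof. The remaining ingredients --- the reduction to round robin and the layer-cake decomposition --- are routine; extending the reduction to a general target $W$ that actively binds would additionally require controlling how the $W$-completion constraint perturbs the greedy picks, which is why the clean statement lives in the regime, guaranteed here by the hypothesis that all balanced allocations are feasible, where W-CRR realizes an honest round-robin rule.
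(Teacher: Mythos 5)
Your proposal has two genuine gaps, one per half of the argument.

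First, your reduction to round robin rests on a misreading of the hypothesis. The proposition assumes that the set of \emph{feasible} allocations $\mathcal{A}$ (a statement about the capacity constraints) contains all balanced allocations; it does \emph{not} assume that $W(I)$, the set of allocations meeting the welfare target, contains them. Proposition~\ref{crr-is-rr} needs the latter, so your claim that the hypothesis puts us ``in the regime where W-CRR realizes an honest round-robin rule'' is false, and the case where the target actively binds --- which is exactly what the words ``for any target social welfare $W$'' are meant to cover --- is left unhandled by your plan. The paper closes this case with a short dichotomy that your write-up is missing: view the target as a welfare level. If that level is at most half the maximum Borda welfare, then every W-Completion call along the round-robin path answers yes (any RB completion of an RB prefix already reaches half the maximum, by the unconstrained bound), so the constraint never binds and W-CRR degenerates to an RB sequential allocation; if the level exceeds half the maximum, then no analysis of the picks is needed at all, since the output satisfies $W$ by construction and hence already exceeds half the maximum. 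Either way the factor $2$ follows without ever ``controlling how the W-Completion constraint perturbs the greedy picks.''

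Second, the core quantitative claim --- that an RB/RR sequential allocation is a $2$-approximation to maximum Borda welfare --- is never actually proved in your proposal: you set up the layer-cake decomposition and the bipartite $b$-matching charging scheme, and then state yourself that the load bound (no high-value edge of $p$ charged more than twice, in the presence of wasted copies) is open and expected ``to occupy most of the proof.'' That load bound \emph{is} the theorem here; an outline that defers it proves nothing. The paper's argument is far more elementary and avoids charging entirely: with $c$ copies per item, before the $k$-th pick overall at most $\lfloor (k-1)/c\rfloor$ items can have been exhausted, so the picking agent obtains Borda value at least $m-\lfloor (k-1)/c\rfloor$; summing over all $cm$ picks gives welfare at least $cm(m+1)/2$, while the optimum is at most $cm\cdot m$. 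Note also that this per-pick bound is precisely what powers the first branch of the dichotomy above (any RB prefix extends, by further RB picks, to an allocation with at least half the maximum welfare), so the two halves of the paper's proof are intertwined in a way your two-stage plan --- first reduce to RR, then analyze RR by matchings --- cannot reproduce.
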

\begin{proof}
We first prove that under strict preferences, the output of any RB sequential allocation gives a 2-approximation of maximum total Borda welfare. 
Let $c$ be a positive integer. Suppose there are $c$ copies of each item. We prove that, in round-robin, the welfare in the $k$-th round increases by at least $m-\lfloor (k-1)/c\rfloor$. Then the total utility is at least $c(m+1)m/2$. 
To prove this, let us consider the worst case for the $k$-th round. Suppose in the $k$-th round, agent $i$ is the one to choose his best item. Then in agent $i$'s perspective, the worst case is that in each of the previous rounds, one of his highest-utility item were taken. In such situation, agent $i$ gets utility $m-\lfloor(k-1)/c\rfloor$.
Obviously, the maximum welfare is at most $cm^2$, which is attained when in each round the utility increases by $m$. Combining these, we conclude that RB sequential allocation gives a 2-approximation of Borda welfare. 

Now consider W-CRR for any target social welfare welfare $W$
If W is less than half the maximum total Borda welfare, then CRR degenerates to any RB sequential allocation. If the target welfare is more than half the maximum total Borda welfare, then this target welfare is indeed achieved by CRR. 
\end{proof}

\section{Empirical Evaluation}
\label{sec:empirical}
In order to evaluate the performance of W-CRR, we have implemented it along with objectives UM, RM, Egal, Nash,  and $\sum$-OWA Maximizing.
In order to easily switch between the various objective functions all models were implemented in Gurobi 8.1 using the Python interface.  All code from this project will be released on GitHub.  We have run experiments on both real-world data from PrefLib \cite{MaWa17,MaWa13a} and data generated according to a Mallows model.  

In all of our experiments, we assume that agents have Borda utilities and these are assigned at the level of the equivalence class in the case of weak rankings from the realworld data; this is common in much of the literature on resource allocation and conference paper assignment \cite{LMNW18a}. In all of our experiments we enforce that all agents and all items, respectively, have the same maximum and minimum capacities. We also enforce that all agents may receive only one copy of any particular item.  
For each of the datasets, we report the fraction of the $n(n-1)$ relations between agents that satisfy EF, EF1, NEF, and NEF1.

\subsection{Synthetic Data}

We generate synthetic data using a Mallows model which gives each agent a strict preference over the items and allows us to measure the effect that correlation between the agent preferences has on the overall results.  For each setting of the $\phi\in [0,1]$ parameter, we generate 25 test instances and report the mean and standard deviation of the proportion of allocations satisfying the various evaluation criteria. We tested for correlation $\phi \in \{ 0.0, 0.25, 0.50, 0.75, 0.95\}$.  
\emph{The value of $1-\phi$ indicates the level of correlation.}
For each $\phi$, we generated instances with 10 agents and 20 items. We set the agent minimum to 3 and maximum to 6, while for the objects we set a minimum of 3 and maximum of 4.

\begin{figure}
 	\centering
 	\includegraphics[width=0.6\linewidth,page=4]{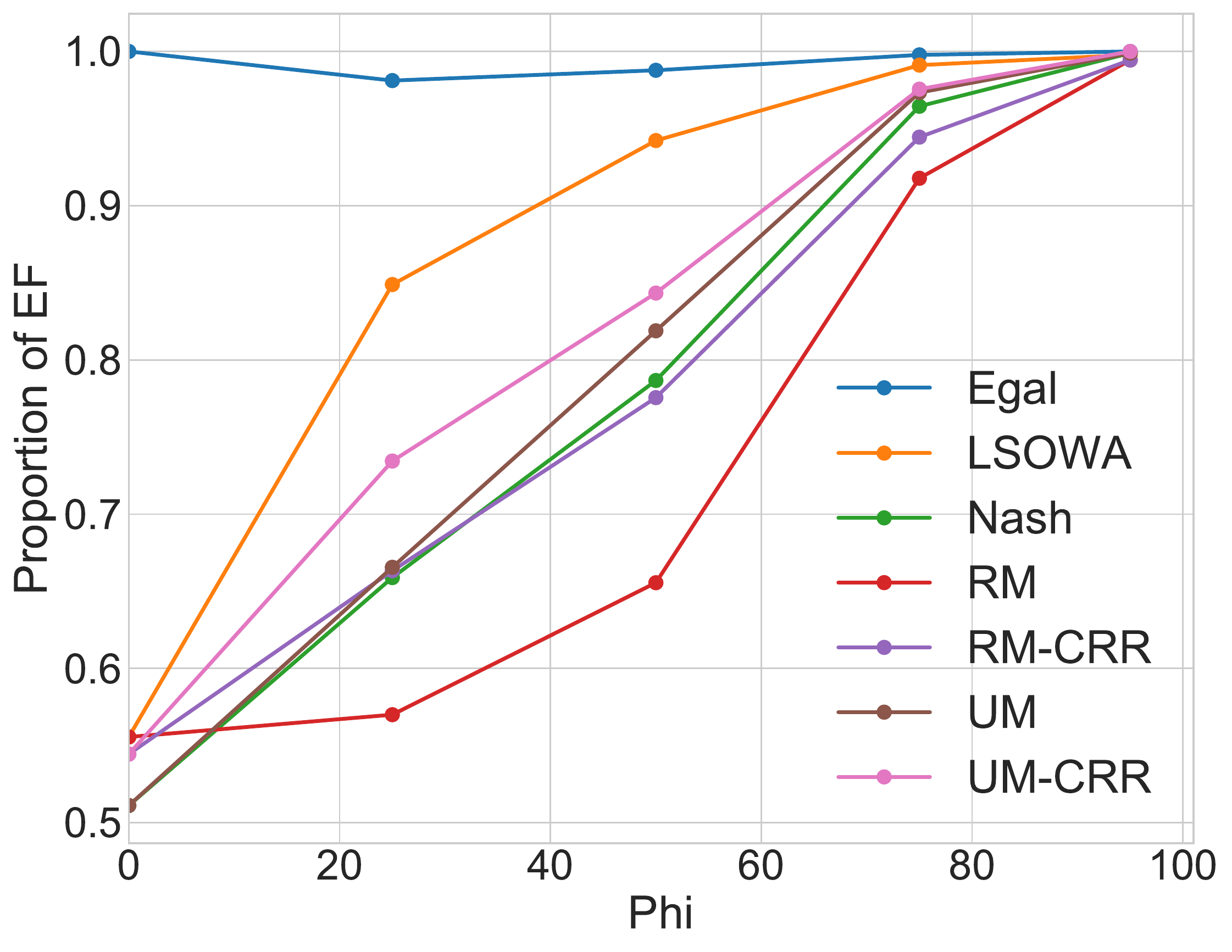}
 \caption{Mean performance of the seven objective functions as we sweep $\phi$ for NEF1.}
 \label{fig:stepping_NEF1}
 \end{figure}

Figure \ref{fig:stepping_NEF1} shows the mean proportion of NEF1 relations that are satisfied for each setting of $\phi$ and provides a good visualization of the effect of correlation on the various algorithms.  We see here that UM-CRR and RM-CRR are able to handle perfectly correlated preferences and provide a solution that has less NEF1 relations than the Egal objective, which is NP-hard to compute.  We also see that the CRR refinement outperforms the UM or RM objective alone for all settings to $\phi$.  Interesting in this graph is the good performance of the $\sum$-OWA objective with a linear OWA (LSOWA).  Similar trends exists for other measures NEF and EF.

\begin{table}[h!]
\resizebox{.95\columnwidth}{!}{
\centering
\begin{tabular}{lccccccc}  
\toprule
& \multicolumn{7}{c}{EF Relations} \\
\cmidrule(r){2-8}
& UM & UM-CRR & RM & RM-CRR & Nash & Linear-$\sum$OWA & Egal  \\
\midrule
MD-2-01		&   0.955  & 0.975  & 0.965  & 0.975  & 0.922  & 0.657  & 0.934 \\
MD-2-02		&   1.0  & 1.0  & 1.0  & 1.0  & 1.0  & 0.835  & 0.998 \\
MD-2-03		&   0.725  & 0.709  & 0.717  & 0.709  & 0.724  & 0.85  & 0.819 \\
\end{tabular}
}

\resizebox{.95\columnwidth}{!}{
\centering
\begin{tabular}{lccccccc}  
\toprule
& \multicolumn{7}{c}{EF1 Relations} \\
\cmidrule(r){2-8}
& UM & UM-CRR & RM & RM-CRR & Nash & Linear-$\sum$OWA & Egal  \\
\midrule
MD-2-01		&   1.0  & 1.0  & 1.0  & 1.0  & 1.0  & 0.823  & 1.0 \\
MD-2-02		&   1.0  & 1.0  & 1.0  & 1.0  & 1.0  & 0.944  & 1.0 \\
MD-2-03		&   0.815  & 0.919  & 0.812  & 0.919  & 0.813  & 0.97  & 1.0 \\
\end{tabular}
}

\resizebox{.95\columnwidth}{!}{
\centering
\begin{tabular}{lccccccc}  
\toprule
& \multicolumn{7}{c}{NEF Relations} \\
\cmidrule(r){2-8}
& UM & UM-CRR & RM & RM-CRR & Nash & Linear-$\sum$OWA & Egal  \\
\midrule
MD-2-01		&   0.951  & 0.966  & 0.961  & 0.965  & 0.902  & 0.615  & 0.927 \\
MD-2-02		&   1.0  & 1.0  & 1.0  & 1.0  & 1.0  & 0.723  & 0.998 \\
MD-2-03		&   0.723  & 0.702  & 0.716  & 0.702  & 0.721  & 0.849  & 0.816 \\
\end{tabular}
}

\resizebox{.95\columnwidth}{!}{
\centering
\begin{tabular}{lccccccc}  
\toprule
& \multicolumn{7}{c}{NEF1 Relations} \\
\cmidrule(r){2-8}
& UM & UM-CRR & RM & RM-CRR & Nash & Linear-$\sum$OWA & Egal  \\
\midrule
MD-2-01		&   0.995  & 1.0  & 0.998  & 1.0  & 0.961  & 0.806  & 0.989 \\
MD-2-02		&   1.0  & 1.0  & 1.0  & 1.0  & 1.0  & 0.918  & 1.0 \\
MD-2-03		&   0.814  & 0.918  & 0.81  & 0.918  & 0.811  & 0.97  & 1.0 \\
\end{tabular}
}

\caption{Results of our CRR algorithms compared to various baseline algorithms for the PrefLib datasets.
}
\label{tab:preflib}
\end{table}

\subsection{Real World Data}

To measure the effectiveness of our algorithms on real-world data we use three datasets from PrefLib: MD-2-01 -- 03 are from small computer science conferences with between 50--175 papers and bids from 25--150 participants.
For all of these datasets, we set the capacity constraints so that each item (paper) is assigned to at least 3 and at most 4 agents.  Each agent gets between 4 and 7 papers.


The results of CRR compared to other objectives are shown in Table \ref{tab:preflib}.  The results are mixed and suggest that there is a rich interplay between the exact bids received and the outcomes of the various algorithms.  Looking first at the EF and EF1 relationships we see that the results are mixed, for some settings, envy goes up when we run the CRR algorithm while in others it actually decreases.  However, for the EF1 relations the CRR refinement is always better and significantly so.  Looking at the NEF and NEF1 relations we see that the CRR algorithm delivers strictly better results in terms of envy than the UM and RM objectives.  
It is interesting to note that in some cases even using the Egalitarian objective does \emph{worse} than the corresponding CRR algorithm.  However, in general, we see that using an Egalitarian objective, though it is NP-hard to optimize for, is a good way to reduce envy on average.

\section{Conclusions}

We have presented a flexible algorithm called CRR that can be parametrized with respect to the target welfare and it can be used in conjunction with any feasibility constraints. 
Since each system designer may have a different sweet spot between welfare and fairness concerns, the CRR algorithm can cater to different `tastes.' For any given welfare requirement, our approach can be seen as providing a fairness serum to the allocation process. 
We  note that if agents have unequal shares or different capacities, this can handled by the algorithm by modifying $N^{Min}$ from $N^{Min}=\{i\in N\mid \not\exists j\in N^{Active} \text{ s.t. } |p(i)|>|p(j)|\}$ to 
$N^{Min}=\{i\in N\midd  |p(i)|<\frac{s_i}{\sum_{j\in N}s_j}|\cup_{j\in N}p(j)|\}.$
We conclude by noting that revisiting standard settings with the goal to incorporate real-life distribution constraints is an important research direction in multi-agent resource allocation. 

%
%

%
%
%
%
%
%
                
      

\clearpage
 \appendix
 

 \section{A. Maximum Nash Welfare Does Not Imply EF1 for Balanced Allocations: Example}

 	There are 4 items, 2 agents, and each agent must get 2 items. The valuations are:
 	\begin{itemize}
 		\item Alice: 5, 5, 2, 2
 	\item Bob:  7, 7, 0, 0
 	\end{itemize}
	
 	There are three possible allocations:
 	\begin{itemize}
 		\item  Alice gets 5,5 and Bob gets 0,0 -- the Nash welfare is 0.
 	\item Alice gets 5,2 and Bob gets 7,0 -- the Nash welfare is 7*7=49.
 	\item Alice gets 2,2 and Bob gets 7,7 -- the Nash welfare is 4*14=56.
 	\end{itemize}
	
 	So the Nash-optimal allocation gives Alice a value of 2+2=4, but it is not EF1 for Alice.
 Note that without the capacity constraints, the Nash-optimal allocation gives Alice 5+2+2=9 and Bob 7, the Nash welfare is 63 and it is indeed EF1.

\clearpage
  \section{B. Stepping Graphs}
  \begin{figure*}[ht!]
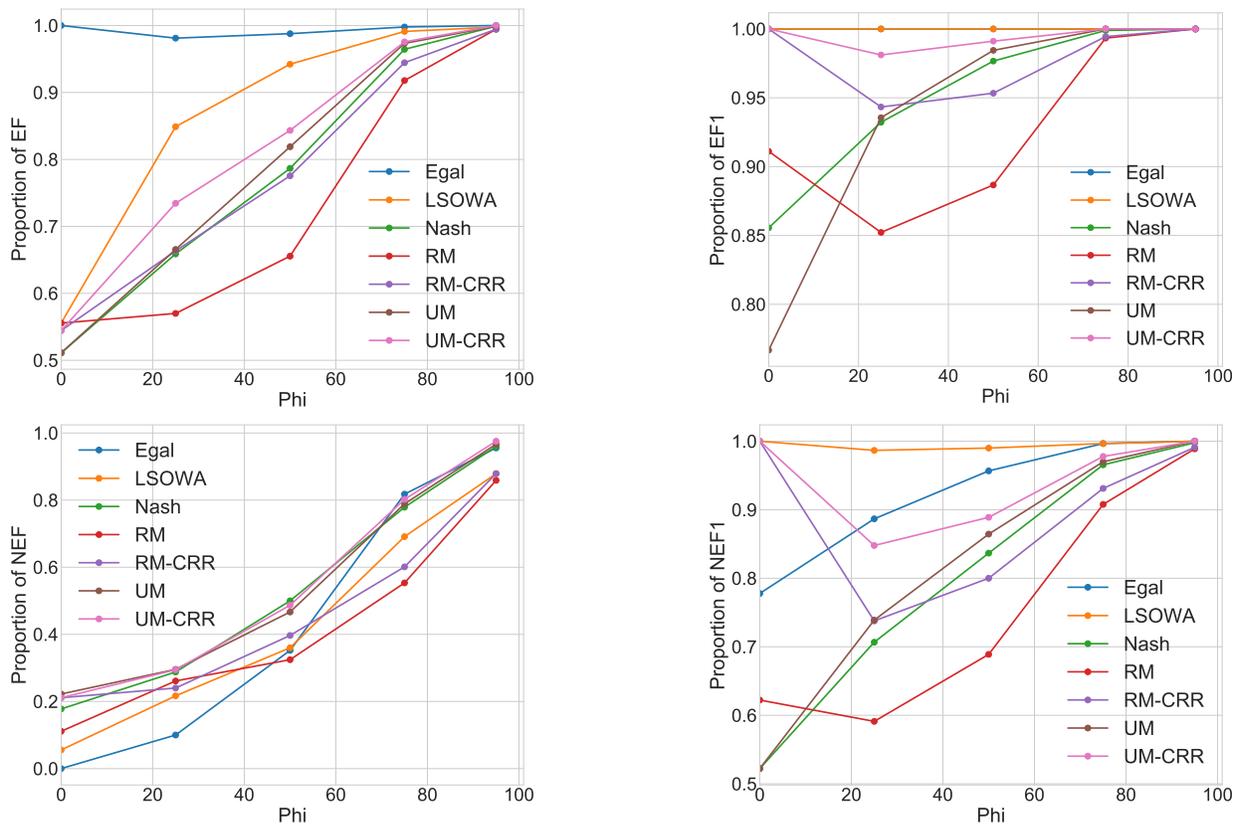

  \centering
  \begin{subfigure}{0.4\linewidth}
  	\centering
  	\includegraphics[width=\linewidth,page=1]{./figures/stepping_plots}
  \end{subfigure}
  \hspace{2cm}
  \begin{subfigure}{0.4\linewidth}
  	\centering
  	\includegraphics[width=\linewidth,page=2]{./figures/stepping_plots}
  \end{subfigure}
  \hfill
  \begin{subfigure}{0.4\linewidth}
  	\centering
  	\includegraphics[width=\linewidth,page=3]{./figures/stepping_plots}
  \end{subfigure}
  \hspace{2cm}
  \begin{subfigure}{0.4\linewidth}
  	\centering
  	\includegraphics[width=\linewidth,page=4]{./figures/stepping_plots}
  \end{subfigure}
  \caption{Graphs of the performance of the seven objective functions as we sweep $\phi$ for each of our evaluation metrics.}
  \label{fig:stepping_graphs}
  \end{figure*}

  Figure \ref{fig:stepping_graphs} captures the proportions of the various relations as we sweep $\phi$ for each of the evaluation criteria.  These plots can be thought of as an aggregate view of the individual boxplots found in the next section.  Looking at this set of graphs we see that the CRR refinements are outperforming the UM and RM algorithms across the sweep of correlated preferences.  This difference is particularly noteworthy for the NEF1 and NEF plots.

\clearpage
  \section{C. Individual Boxplots}

 Figures \ref{fig:phi_0} through Figure \ref{fig:phi_95} show the distribution of the results for each of the 25 trials for each of the settings to $\phi \in \{ 0.0, 0.25, 0.50, 0.75, 0.95\}$.  These figures give a more nuanced view of what is happening in the interplay between preference correlation and the fairness of the assignments that are found.  Most interestingly is that, as seen in Figure \ref{fig:stepping_graphs} we see the CRR refinements beating out the underling UM and RM allocations at each of the steps.  Another interesting aspect is that when preferences are random ($\phi \in 0.95$) all algorithms are particularly good at finding fair allocations.  The real interesting areas to focus on are when preferences are correlated but not perfectly so.  In these cases, e.g., Figure \ref{fig:phi_0} that NEF is indeed a very stringent requirement and none of the algorithms can guarantee more than 50\% of the relations satisfy the measure.  However, NEF1 is interesting in that many of the algorithms, and especially the CRR algorithms, are able to find an allocation satisfying NEF1 for over 80\% of the pairs.

  \begin{figure*}[hb!]
  \centering
  \begin{subfigure}{0.4\linewidth}
  	\centering
  	\includegraphics[width=\linewidth,page=1]{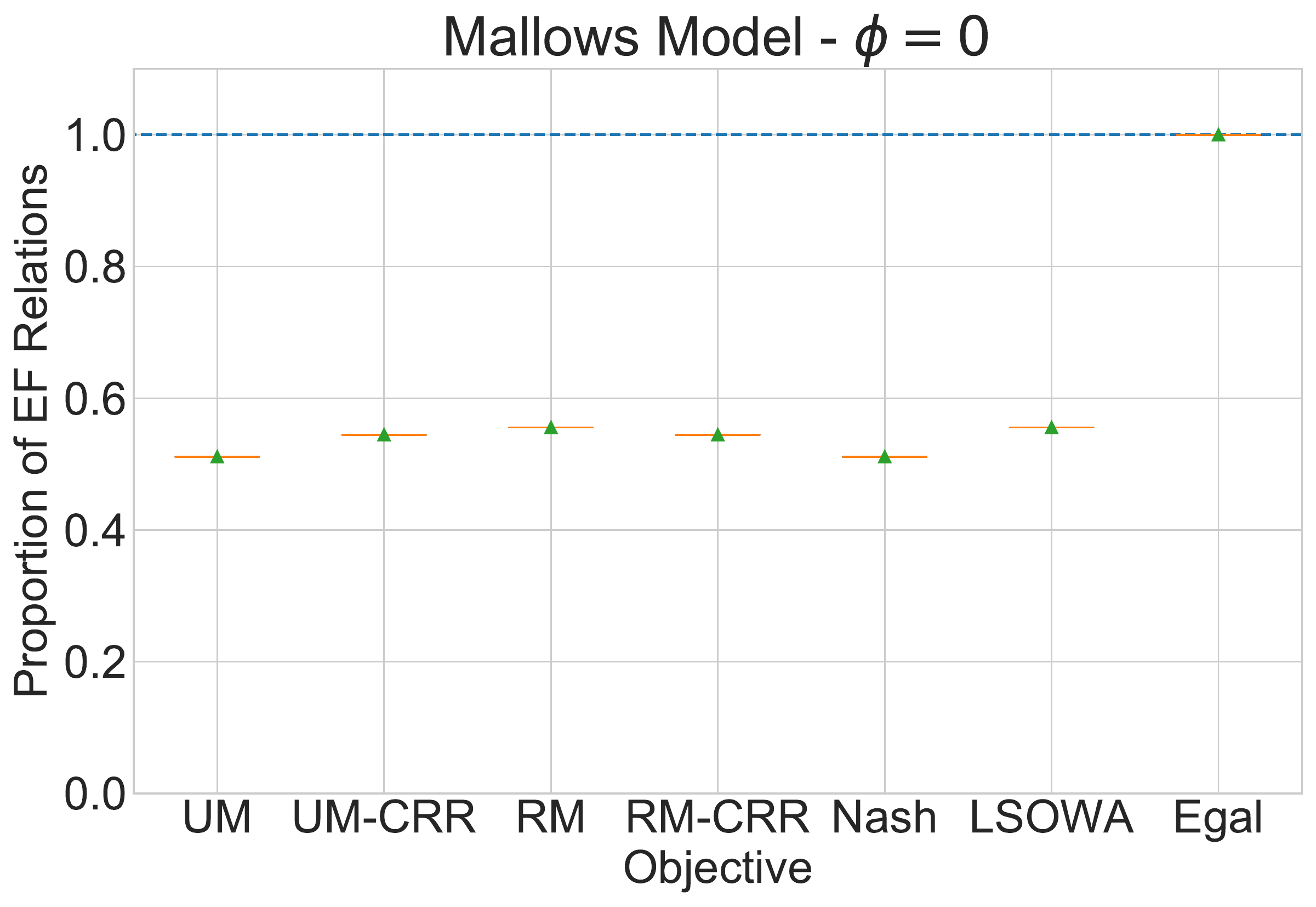}
  \end{subfigure}
  \hspace{2cm}
  \begin{subfigure}{0.4\linewidth}
  	\centering
  	\includegraphics[width=\linewidth,page=2]{./figures/boxplots0}
  \end{subfigure}
  \hfill
  \begin{subfigure}{0.4\linewidth}
  	\centering
  	\includegraphics[width=\linewidth,page=3]{./figures/boxplots0}
  \end{subfigure}
  \hspace{2cm}
  \begin{subfigure}{0.4\linewidth}
  	\centering
  	\includegraphics[width=\linewidth,page=4]{./figures/boxplots0}
  \end{subfigure}
  \caption{Boxplots for $\phi=0$.}
  \label{fig:phi_0}
  \end{figure*}

  \begin{figure*}[ht]
  \centering
  \begin{subfigure}{0.4\linewidth}
  	\centering
  	\includegraphics[width=\linewidth,page=1]{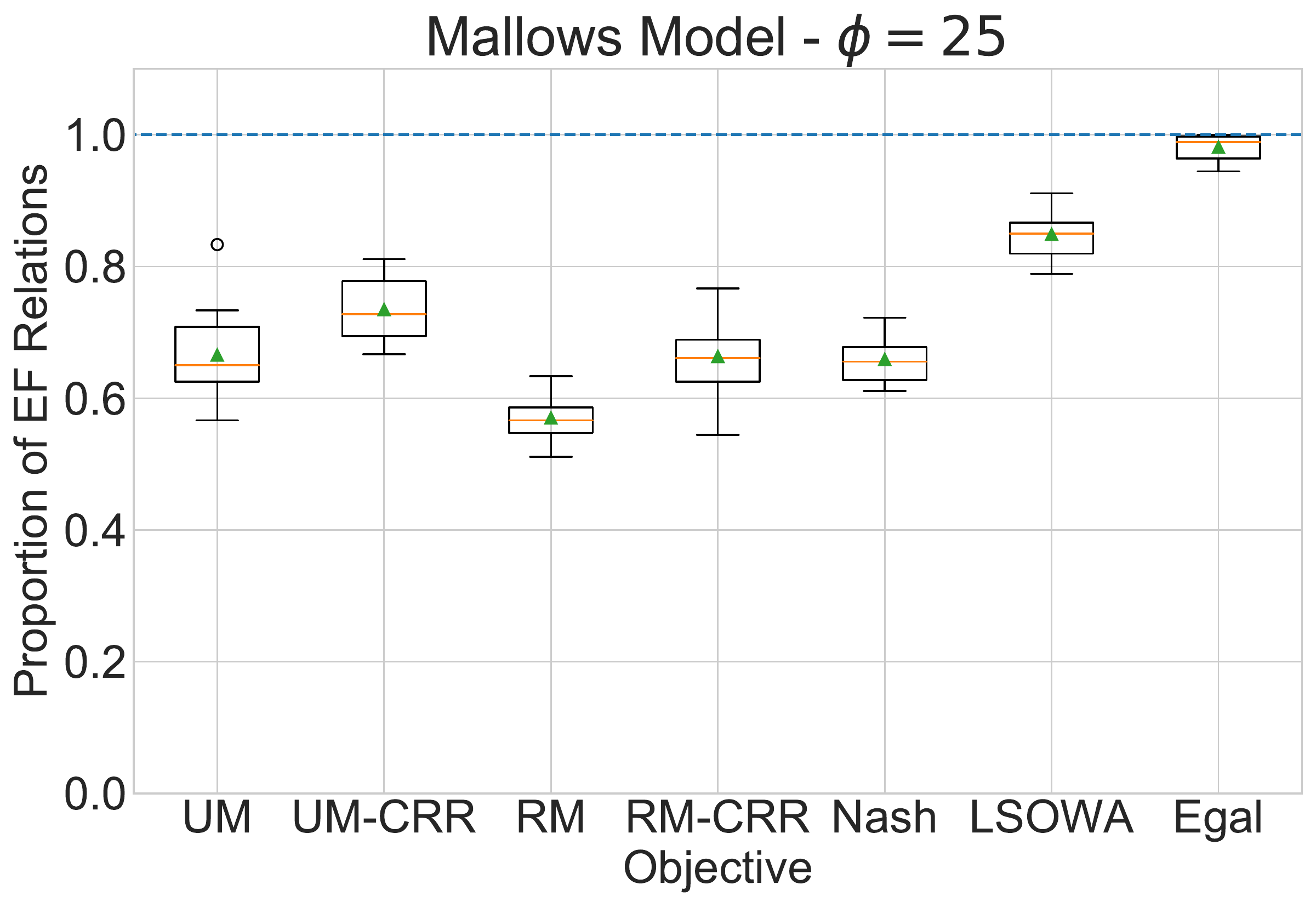}
  \end{subfigure}
  \hspace{2cm}
  \begin{subfigure}{0.4\linewidth}
  	\centering
  	\includegraphics[width=\linewidth,page=2]{./figures/boxplots25}
  \end{subfigure}
  \hfill
  \begin{subfigure}{0.4\linewidth}
  	\centering
  	\includegraphics[width=\linewidth,page=3]{./figures/boxplots25}
  \end{subfigure}
  \hspace{2cm}
  \begin{subfigure}{0.4\linewidth}
  	\centering
  	\includegraphics[width=\linewidth,page=4]{./figures/boxplots25}
  \end{subfigure}
  \caption{Boxplots for $\phi=25$.}
   \label{fig:phi_25}
  \end{figure*}

  \begin{figure*}[ht]
  \centering
  \begin{subfigure}{0.4\linewidth}
  	\centering
  	\includegraphics[width=\linewidth,page=1]{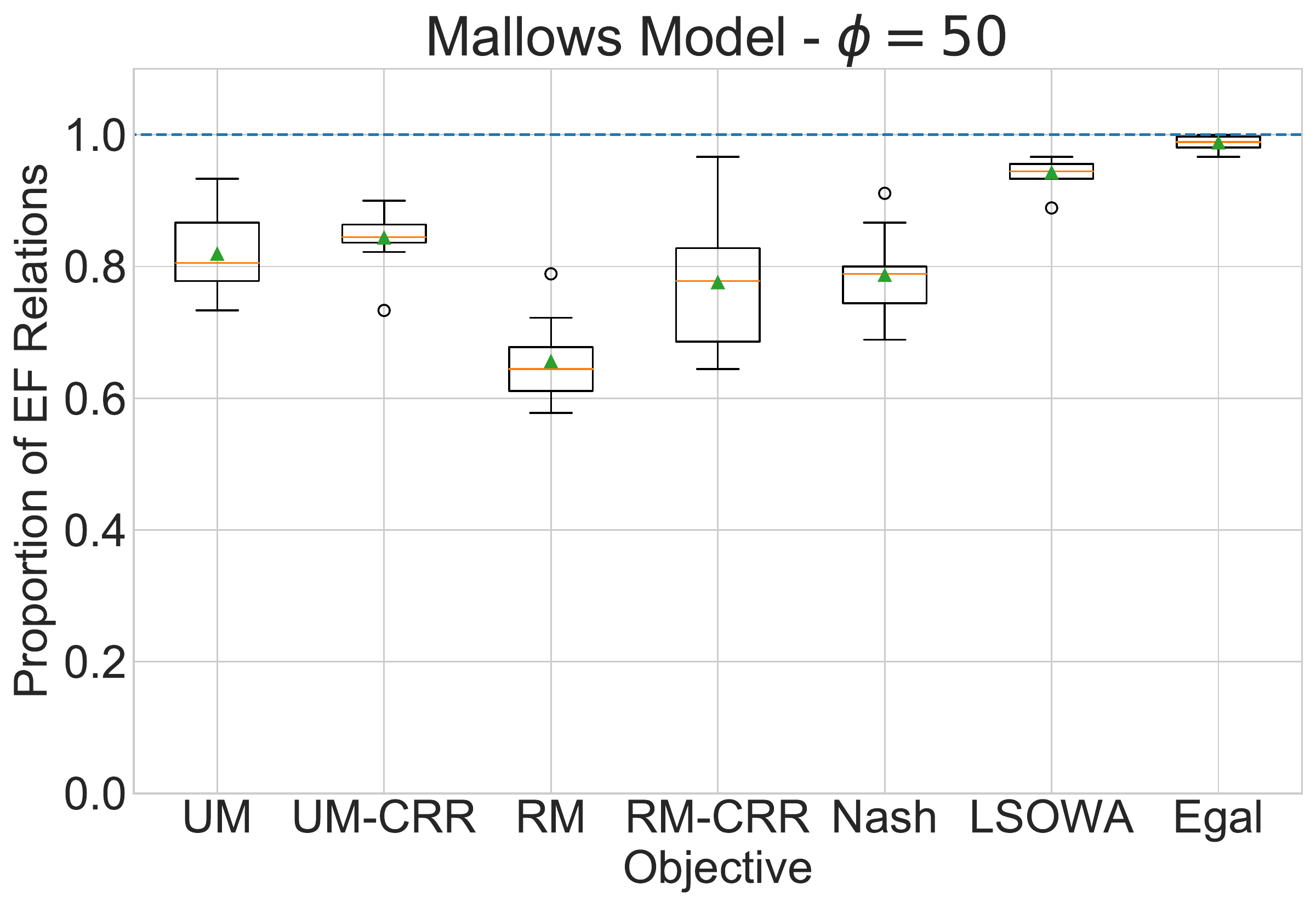}
  \end{subfigure}
  \hspace{2cm}
  \begin{subfigure}{0.4\linewidth}
  	\centering
  	\includegraphics[width=\linewidth,page=2]{./figures/boxplots50}
  \end{subfigure}
  \hfill
  \begin{subfigure}{0.4\linewidth}
  	\centering
  	\includegraphics[width=\linewidth,page=3]{./figures/boxplots50}
  \end{subfigure}
  \hspace{2cm}
  \begin{subfigure}{0.4\linewidth}
  	\centering
  	\includegraphics[width=\linewidth,page=4]{./figures/boxplots50}
  \end{subfigure}
  \caption{Boxplots for $\phi=50$.}
   \label{fig:phi_50}
  \end{figure*}

  \begin{figure*}[ht]
  \centering
  \begin{subfigure}{0.4\linewidth}
  	\centering
  	\includegraphics[width=\linewidth,page=1]{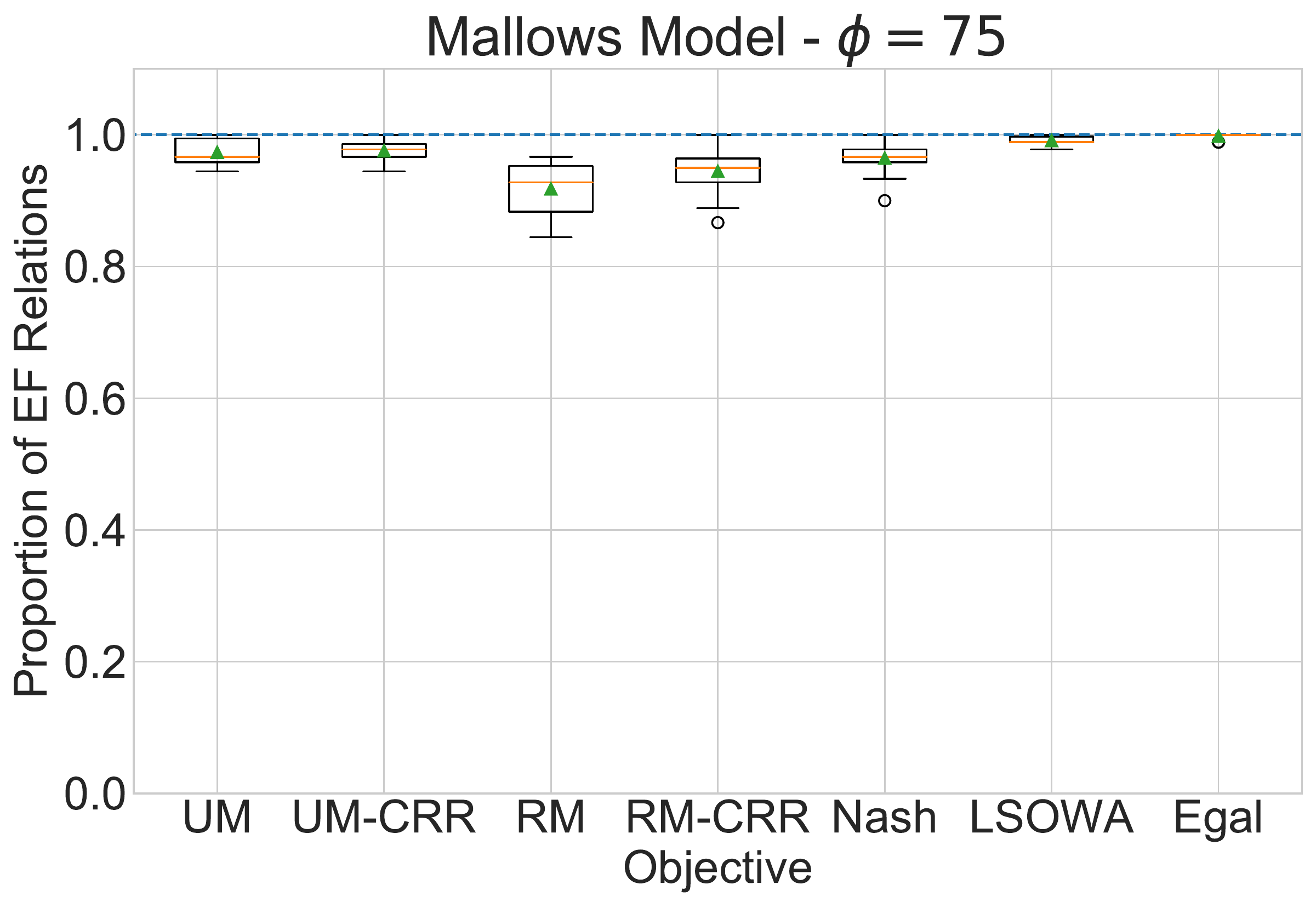}
  \end{subfigure}
  \hspace{2cm}
  \begin{subfigure}{0.4\linewidth}
  	\centering
  	\includegraphics[width=\linewidth,page=2]{./figures/boxplots75}
  \end{subfigure}
  \hfill
  \begin{subfigure}{0.4\linewidth}
  	\centering
  	\includegraphics[width=\linewidth,page=3]{./figures/boxplots75}
  \end{subfigure}
  \hspace{2cm}
  \begin{subfigure}{0.4\linewidth}
  	\centering
  	\includegraphics[width=\linewidth,page=4]{./figures/boxplots75}
  \end{subfigure}
  \caption{Boxplots for $\phi=75$.}
   \label{fig:phi_75}
  \end{figure*}

  \begin{figure*}[ht]
  \centering
  \begin{subfigure}{0.4\linewidth}
  	\centering
  	\includegraphics[width=\linewidth,page=1]{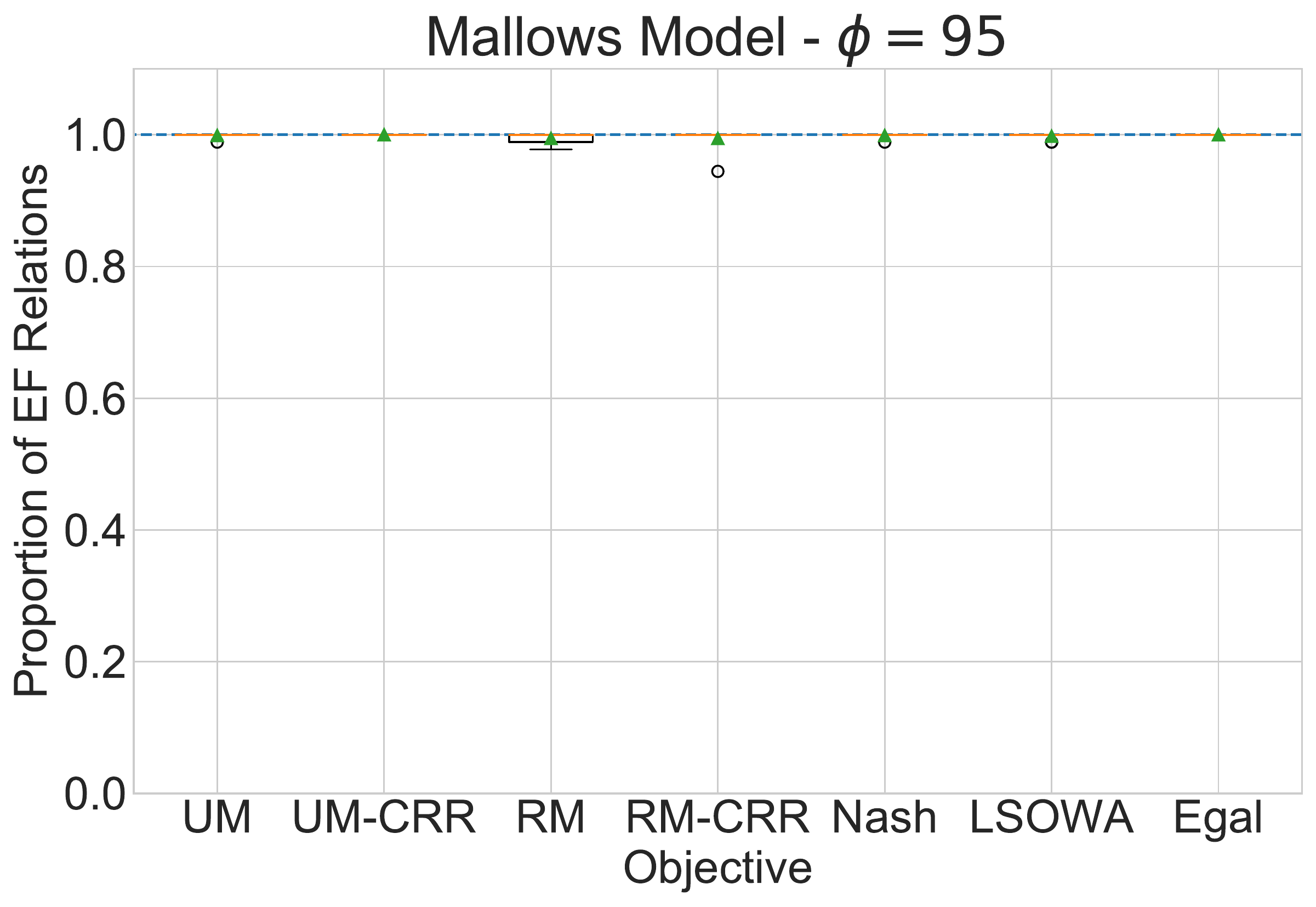}
  \end{subfigure}
  \hspace{2cm}
  \begin{subfigure}{0.4\linewidth}
  	\centering
  	\includegraphics[width=\linewidth,page=2]{./figures/boxplots95}
  \end{subfigure}
  \hfill
  \begin{subfigure}{0.4\linewidth}
  	\centering
  	\includegraphics[width=\linewidth,page=3]{./figures/boxplots95}
  \end{subfigure}
  \hspace{2cm}
  \begin{subfigure}{0.4\linewidth}
  	\centering
  	\includegraphics[width=\linewidth,page=4]{./figures/boxplots95}
  \end{subfigure}
  \caption{Boxplots for $\phi=0.95$.}
   \label{fig:phi_95}
  \end{figure*}

\end{document}